\documentclass[acmtog]{acmart}
%\acmSubmissionID{2405}

% \BibTeX command to typeset BibTeX logo in the docs
\AtBeginDocument{%
	}

% TOG prefers author-name bib system with square brackets
\citestyle{acmauthoryear}
%\setcitestyle{nosort,square} % nosort to allow for manual chronological ordering

\usepackage{makecell}
\usepackage{multirow}
\usepackage{colortbl}
\usepackage{enumitem}
\usepackage[ruled]{algorithm2e} % For algorithms
\usepackage{xspace}

\newcommand{\ie}{\textit{i.e.,}\xspace}
\newcommand{\eg}{\textit{e.g.,}\xspace}
\newcommand{\minute}{\textit{min}}
\newcommand{\hour}{\textit{hrs}}

\newcommand{\desc}[1]{\parbox[c]{1\linewidth}{#1}}
\SetAlFnt{\small}
\SetAlCapFnt{\small}
\SetAlCapNameFnt{\small}
\SetAlCapHSkip{0pt}

% Copyrights and DOI
%\setcopyright{none}
\setcopyright{acmlicensed}
\copyrightyear{2025}
\acmYear{2025}
\acmDOI{10.1145/3757377.3764000}

% DO NOT ENTER AUTHOR INFORMATION FOR ANONYMOUS TECHNICAL PAPER SUBMISSIONS TO SIGGRAPH 2025!
%\renewcommand\shortauthors{Hai-Long Qin, Sixian Wang, Guo Lu, Jincheng Dai}

% Metadata Information
\acmConference[SA Conference Papers '25]{SIGGRAPH Asia 2025 Conference Papers}{December 15--18, 2025}{Hong Kong, Hong Kong}
\acmBooktitle{SIGGRAPH Asia 2025 Conference Papers (SA Conference Papers '25), December 15--18, 2025, Hong Kong, Hong Kong}

\acmISBN{979-8-4007-2137-3/25/12}

% Document starts
\begin{document}
% Title portion
\title{Neural Hamiltonian Deformation Fields for Dynamic Scene Rendering}

\author{Hai-Long Qin}
\orcid{0009-0002-9114-9881}
\affiliation{%
	  \institution{Beijing University of Posts and Telecommunications}
	  \city{Beijing}
	  \country{China}}
\email{hailong.qin@bupt.edu.cn}

\author{Sixian Wang}
\orcid{0000-0002-0621-1285}
\affiliation{%
	\institution{Shanghai Jiao Tong University}
	\city{Shanghai}
	\country{China}}
\email{sxwang@sjtu.edu.cn}

\author{Guo Lu}
\orcid{0000-0001-6951-0090}
\affiliation{%
	\institution{Shanghai Jiao Tong University}
	\city{Shanghai}
	\country{China}}
\email{luguo2014@sjtu.edu.cn}

\author{Jincheng Dai}
\authornote{Corresponding author.}
\orcid{0000-0002-0310-568X}
\affiliation{%
	\institution{Beijing University of Posts and Telecommunications}
	\city{Beijing}
	\country{China}}
\email{daijincheng@bupt.edu.cn}

\begin{abstract}
Representing and rendering dynamic scenes with complex motions remains challenging in computer vision and graphics. Recent dynamic view synthesis methods achieve high-quality rendering but often produce physically implausible motions. We introduce NeHaD, a neural deformation field for dynamic Gaussian Splatting governed by Hamiltonian mechanics. Our key observation is that existing methods using MLPs to predict deformation fields introduce inevitable biases, resulting in unnatural dynamics. By incorporating physics priors, we achieve robust and realistic dynamic scene rendering. Hamiltonian mechanics provides an ideal framework for modeling Gaussian deformation fields due to their shared phase-space structure, where primitives evolve along energy-conserving trajectories. We employ Hamiltonian neural networks to implicitly learn underlying physical laws governing deformation. Meanwhile, we introduce Boltzmann equilibrium decomposition, an energy-aware mechanism that adaptively separates static and dynamic Gaussians based on their spatial-temporal energy states for flexible rendering. To handle real-world dissipation, we employ second-order symplectic integration and local rigidity regularization as physics-informed constraints for robust dynamics modeling. Additionally, we extend NeHaD to adaptive streaming through scale-aware mipmapping and progressive optimization. Extensive experiments demonstrate that NeHaD achieves physically plausible results with a rendering quality-efficiency trade-off. To our knowledge, this is the first exploration leveraging Hamiltonian mechanics for neural Gaussian deformation, enabling physically realistic dynamic scene rendering with streaming capabilities.
\end{abstract}

%
% The code below should be generated by the tool at
% http://dl.acm.org/ccs.cfm
% Please copy and paste the code instead of the example below.
%
%\begin{CCSXML}
%	<ccs2012>
%	<concept>
%	<concept_id>10010147.10010371.10010372</concept_id>
%	<concept_desc>Computing methodologies~Rendering</concept_desc>
%	<concept_significance>500</concept_significance>
%	</concept>
%	</ccs2012>
%\end{CCSXML}
%
%\ccsdesc[500]{Computing methodologies~Rendering}

\begin{CCSXML}
	<ccs2012>
	<concept>
	<concept_id>10010147.10010371.10010382.10010385</concept_id>
	<concept_desc>Computing methodologies~Image-based rendering</concept_desc>
	<concept_significance>500</concept_significance>
	</concept>
	<concept>
	<concept_id>10010147.10010371.10010372.10010373</concept_id>
	<concept_desc>Computing methodologies~Rasterization</concept_desc>
	<concept_significance>500</concept_significance>
	</concept>
	<concept>
	<concept_id>10010147.10010178.10010224.10010245.10010254</concept_id>
	<concept_desc>Computing methodologies~Reconstruction</concept_desc>
	<concept_significance>500</concept_significance>
	</concept>
	</ccs2012>
\end{CCSXML}

\ccsdesc[500]{Computing methodologies~Image-based rendering}
\ccsdesc[500]{Computing methodologies~Rasterization}
\ccsdesc[500]{Computing methodologies~Reconstruction}

%
% End generated code
%

\keywords{Novel View Synthesis, Gaussian Splatting, Dynamic Scene Reconstruction, Hamiltonian Mechanics}

\begin{teaserfigure}
	\centering
	\includegraphics[width=1.0\linewidth]{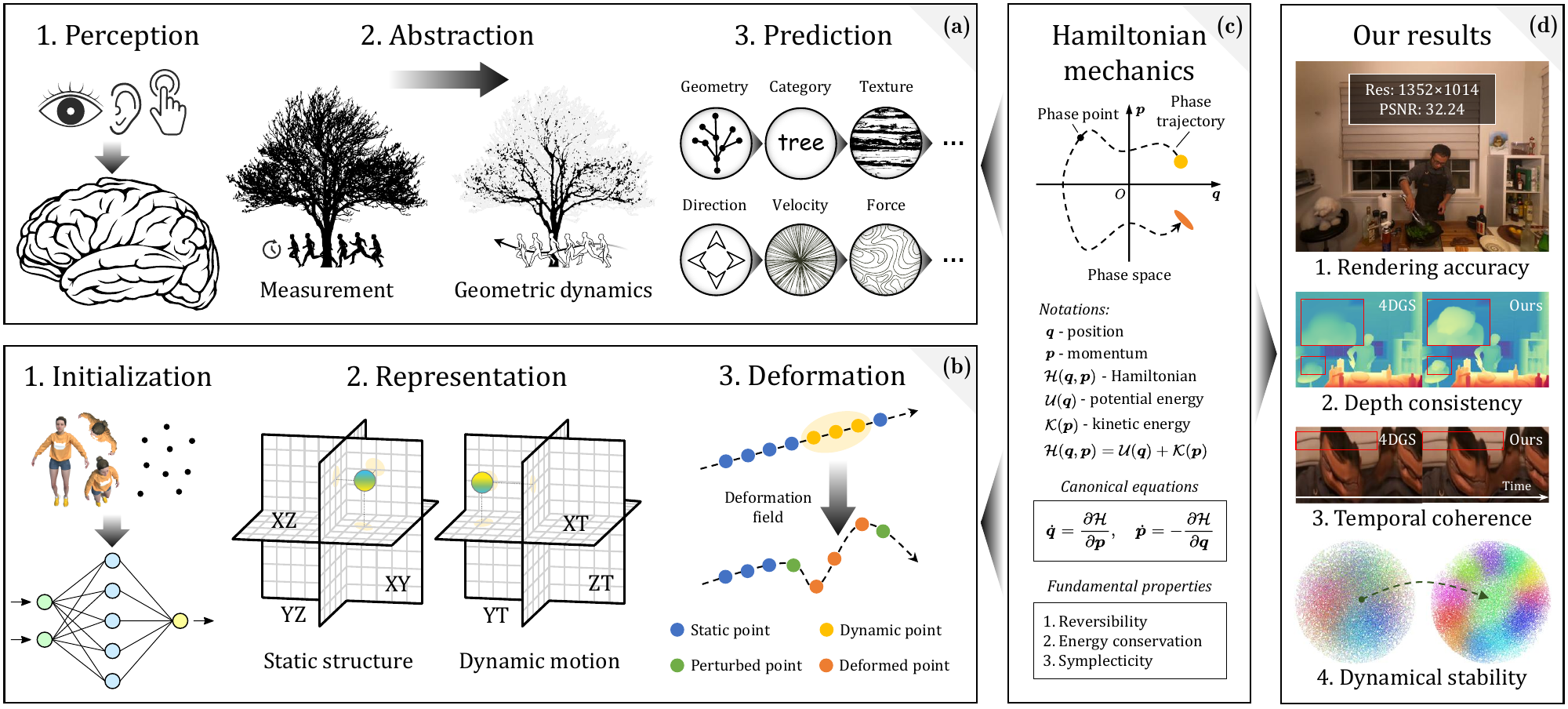}
	\caption{\textbf{Our method uses Hamiltonian mechanics to enhance dynamic Gaussian splatting for improved rendering quality and motion coherence.}
		(a) Human cognition process.
		(b) Scene rendering process.
		(c) Both processes follow physical laws, with Hamiltonian mechanics offering mathematical frameworks aligned with physical intuition.
		(d) Our method incorporates Hamiltonian mechanics as a physical prior in dynamic Gaussian splatting, improving rendering reality (4. different colors represent different deformations, and our method reduces overlap while keeping structural clusters).
	}
	\label{fig:teaser}
\end{teaserfigure}

\maketitle

\section{Introduction}
Representing and rendering dynamic scenes is pivotal for immersive imaging applications, pushing the boundaries of multimedia and graphics technologies including virtual reality (VR) and metaverse. Given discrete temporal video sequences, dynamic scene rendering aims to model scene dynamics and synthesize high-fidelity novel views at arbitrary timestamps in real-time. \emph{It confronts two primary challenges}: first, achieving high-fidelity reconstruction of complex dynamic scenes with rapid motions, and topological changes; second, maintaining real-time rendering efficiency with low training cost. However, existing methods struggle to simultaneously satisfy both ends, which is precisely the issue this article explores.

Recent advancements in dynamic scene rendering have been achieved mainly through methods based on Neural Radiance Field (NeRF)~\cite{nerf} and Gaussian Splatting~\cite{gaussian-splatting}. NeRF employs implicit neural fields to model static scenes and achieves photorealistic view synthesis. Its dynamic extensions either utilize deformation and canonical fields to model motions relative to canonical frames over time~\cite{dnerf, dynamic-nerf, hypernerf, nerf-ds, dynerf}, or store 4D volumes as explicit structural representations (\eg planes and hash encodings)~\cite{eg3d, instant-ngp, tensor4d, k-planes, hexplane, tineuvox, masked-spacetime-hashing, factorized-motion}. Despite progress in rendering quality, these methods suffer from slow rendering speeds due to their requirements for dense sampling along rays during rendering.

With the emergence of 3D Gaussian Splatting (3DGS)~\cite{gaussian-splatting}, high-fidelity and real-time rendering for static scenes becomes possible. Building on this milestone, several methods have extended 3DGS with the ability to model dynamic scenes~\cite{4dgs, deformable3d, spacetime-gaussians, beida4dgs, sc-gs, 3dgstream, grid4d, swift4d, saro-gs}. These approaches either use predefined functions for Gaussian deformation in scenes with sufficient viewpoints~\cite{gaussian-flow, spacetime-gaussians}, or employ neural networks to predict deformed Gaussian attributes~\cite{4dgs, deformable3d, sc-gs}. Nevertheless, they struggle to model temporally complex dynamics such as object appearances and disappearances, and fail to properly differentiate spatial and temporal deformations, leading to excessive coordinate overlap among deformed Gaussians and degraded rendering quality in complex motion scenarios.

To address the aforementioned challenges, we propose NeHaD, a neural deformation field for dynamic Gaussian Splatting governed by Hamiltonian mechanics. While Gaussian-based methods achieve real-time rendering, \emph{we argue that realistic rendering requires not only perceptual quality but also physically plausible dynamics}, which most current approaches have overlooked. Intriguingly as illustrated in Fig. \ref{fig:teaser}, the processes of human cognition and scene rendering both follow fundamental physical laws, particularly Hamiltonian mechanics for predicting system dynamics~\cite{noether1971invariant}, demonstrating the natural applicability of Hamiltonian principles to dynamic scene rendering. Moreover, Gaussian covariance matrices inherently exist on symplectic manifolds, making Hamiltonian mechanics a proper choice for mathematically reformulating Gaussian deformation fields.

Building on these insights, NeHaD enhances the Gaussian deformation fields of 4DGS~\cite{4dgs} through Hamiltonian mechanics. First, we replace the MLP-based deformation predictor with a Hamiltonian neural network (HNN)~\cite{hnn} to learn underlying conservation laws from data in an unsupervised manner. Through in-graph backpropagation of Hamiltonian gradients, HNN ensures stable and coherent deformations without sudden discontinuities while incurring minimal training overhead. Observing that most scene regions often remain static and requires no dynamic modeling~\cite{swift4d}, NeHaD introduces a soft masking mechanism that dynamically weights Gaussians based on their spatial-temporal energy states, enabling adaptive decomposition of static and dynamic components. This mechanism is driven by Boltzmann energy derived from primitives' deviation from equilibrium. To handle real-world dissipative forces such as friction, we employ second-order symplectic integration to preserve system symplecticity under perturbations. Additionally, we incorporate local rigidity constraints to avoid large rotations while preserving smaller, more natural rotations.

To extend NeHaD to bandwidth-constrained streaming applications, we incorporate scale-aware anisotropic mipmapping for anti-aliasing and layered progressive optimization for level-of-detail (LOD) rendering. We extensively evaluate our approach on monocular and multi-view dynamic scene datasets containing both synthetic and real-world scenes. Both quantitative and qualitative results demonstrate that our method achieves physically plausible rendering with an improved quality-efficiency trade-off, effectively modeling complex system dynamics across diverse scenes. Our contributions are summarized below:
\begin{itemize}
	\item We propose a Hamiltonian-based neural deformation field for dynamic Gaussian Splatting. Using a single HNN with attribute-specific adapters, we implicitly learn conservation laws from data for physically plausible deformations.
	
	\item We introduce Boltzmann equilibrium decomposition to adaptively separate static and dynamic Gaussians. Meanwhile, we enhance deformation modeling through physics-informed constraints, ensuring robust and realistic rendering.
	
	\item We extend NeHaD to streaming with scale-aware mipmapping and progressive optimization. Extensive experiments demonstrate our improvements in rendering reality. To our knowledge, this is the first exploration leveraging Hamiltonian mechanics for neural Gaussian deformation.
\end{itemize}

\section{Related Work}

\subsection{NeRF-based Dynamic Scene Rendering}
NeRF~\cite{nerf} reconstructs light fields of static scenes through implicit neural representations, achieving significant visual improvements. To extend NeRF to dynamic scenes, implicit deformation fields are applied to static models~\cite{dnerf}.

Various approaches have been developed to model dynamic scenes more accurately. Some methods segment scenes into components with different temporal behaviors~\cite{dynamic-nerf, nr-nerf}, while others incorporate higher-dimensional latent codes~\cite{dynerf, hypernerf, nerfies} with additional supervision techniques including optical flow across frames~\cite{fsdnerf, forwardflowdnerf, saff, li2021neural} and motion mask constraints~\cite{nerf-ds}. Meanwhile, special attention to rigid objects is particularly important due to their prevalence and unique physical properties~\cite{nr-nerf, star}.

Recent research has addressed challenging scenarios including dynamic human modeling~\cite{ndr}, specular objects~\cite{nerf-ds}, streaming~\cite{nerfplayer}, and scenes without known camera poses~\cite{rodynrf}. However, implicit MLP-based representations suffer from over-smoothing and require computationally expensive training. Explicit representations, such as Triplanes~\cite{eg3d} and Hash Encoding~\cite{instant-ngp}, address these limitations by improving both visual quality and training efficiency. A popular approach for dynamic scene rendering decomposes 4D inputs into six 2D planes~\cite{k-planes, hexplane, tensor4d, 4k4d, factorized-motion}.

\subsection{Gaussian-based Dynamic Scene Rendering}
3DGS~\cite{gaussian-splatting} represents static scenes using Gaussian primitives, achieving fast training and high visual quality. For dynamic scenes, two main approaches have emerged: using 4D Gaussians or deforming Gaussians with predefined functions~\cite{fudan4dgs, beida4dgs, gaussian-flow, dynamic-3d-gaussians}, and deforming 3D Gaussian attributes through neural networks~\cite{deformable3d, 4dgs, dynmf, md-splatting, gaufre, gags, cogs, 3dgstream}.

While fully MLP-based Gaussian deformation fields achieve high quality~\cite{deformable3d}, they suffer from over-smoothing, leading to poor detail rendering in complex scenes. Explicit methods like 4DGS~\cite{4dgs} employ plane-based deformation fields, but their low-rank assumptions cause feature overlap and rendering artifacts. Recent advances aim to address these limitations: Motion-aware methods~\cite{zhu2024motiongs, guo2024motion} leverage optical flow constraints to guide Gaussian deformations for robust rendering; SaRO-GS~\cite{saro-gs} uses scale-aware residual fields with explicit-implicit blending for better spatial-temporal correlations; Grid4D~\cite{grid4d} decomposes 4D encoding into spatial and temporal 3D hash encodings without low-rank assumptions; and Swift4D~\cite{swift4d} separates Gaussians into static and dynamic components, applying deformation only to dynamic points.

Building on this static-dynamic decomposition paradigm, our method distinguishes itself through Boltzmann energy-aware soft decomposition of Gaussians and physics-informed deformations guided by Hamiltonian mechanics.

\section{Preliminaries}

\subsection{Gaussian Splatting}
\noindent \textbf{Static Gaussian Splatting.} 3DGS~\cite{gaussian-splatting} has emerged as a powerful static scene representation method, known for its high training speed and visual quality. Given input images with corresponding camera parameters, 3DGS explicitly represents scene geometry and appearance using anisotropic ellipsoids (\ie Gaussian primitives), contrasting with NeRF's implicit neural representation strategy. Each Gaussian primitive $\mathcal{G}$ is parameterized by position $\boldsymbol{\mu} \in \mathbb{R}^3$, covariance matrix $\boldsymbol{\Sigma} \in \mathbb{R}^{3\times3}$, color $\boldsymbol{c} \in \mathbb{R}^n$, and opacity $\alpha \in \mathbb{R}$. The covariance matrix is factorized into scaling vector $\boldsymbol{s} \in \mathbb{R}^3$ and rotation quaternion $\boldsymbol{r} \in \mathbb{R}^4$, while color is represented by spherical harmonic (SH) coefficients with $n$ SH functions. For rendering, 3DGS employs the tile-based differentiable rasterization.

\noindent \textbf{Dynamic Gaussian Splatting.} 3DGS can be extended to 4D dynamic scenes by incorporating the temporal dimension. Instead of applying 3DGS to individual frames, 4DGS~\cite{4dgs} uses plane-based deformation fields for real-time dynamic rendering.

Given the camera view matrix $\boldsymbol{V}$, a novel-view image $\boldsymbol{X}$ is rendered as $\boldsymbol{X} = \mathcal{R}(\boldsymbol{V}, \mathcal{G}')$, where $\mathcal{G}' = \mathcal{G} + \Delta \mathcal{G}$ represents deformed Gaussians and $\mathcal{R}$ denotes differentiable rasterization.

The deformation $\Delta \mathcal{G}$ is predicted by deformation field at timestamp $t$. Specifically, a hex-plane encoder $\mathcal{E}$ extracts spatial-temporal features, which are then processed by MLP decoder $\mathcal{D}$ to predict deformations, \ie $\Delta \mathcal{G} = \mathcal{D}(\mathcal{E}(\mathcal{G}, t))$.

The hex-plane factorization employs six planes: spatial-only planes $\boldsymbol{P}_{XY}$, $\boldsymbol{P}_{XZ}$, $\boldsymbol{P}_{YZ}$ and spatial-temporal planes $\boldsymbol{P}_{XT}$, $\boldsymbol{P}_{YT}$, $\boldsymbol{P}_{ZT}$. For a 4D coordinate $\boldsymbol{u} = (x, y, z, t)$, features are obtained by:
\begin{equation}
	\boldsymbol{f}(\boldsymbol{u}) = \prod_{k \in K} \boldsymbol{f}(\boldsymbol{u})_k = \prod_{k \in K} \psi(\boldsymbol{P}_k, \pi_k(\boldsymbol{u})),  \label{eq:k-planes}
\end{equation}
where $\pi_k$ projects $\boldsymbol{u}$ onto plane $k$, and $\psi$ means bilinear interpolation.

Using extracted features $\boldsymbol{f}$, a multi-head decoder $\mathcal{D}=\{\mathcal{D}_{\boldsymbol{\mu}}, \mathcal{D}_{\boldsymbol{s}}, \mathcal{D}_{\boldsymbol{r}}\}$ predicts Gaussian deformations for position ($\Delta \boldsymbol{\mu} = \mathcal{D}_{\boldsymbol{\mu}}(\boldsymbol{f})$), scaling ($\Delta \boldsymbol{s} = \mathcal{D}_{\boldsymbol{s}}(\boldsymbol{f})$), and rotation ($\Delta \boldsymbol{r} = \mathcal{D}_{\boldsymbol{r}}(\boldsymbol{f})$). The final deformed Gaussian is $\mathcal{G}' = \{\boldsymbol{\mu}', \boldsymbol{s}', \boldsymbol{r}', \alpha, \boldsymbol{c}\}$ where $\boldsymbol{\mu}' = \boldsymbol{\mu} + \Delta \boldsymbol{\mu}$, $\boldsymbol{s}' = \boldsymbol{s} + \Delta \boldsymbol{s}$, and $\boldsymbol{r}' = \boldsymbol{r} + \Delta \boldsymbol{r}$.

\subsection{Hamiltonian Mechanics}  \label{sec:hamilton}
Given position-momentum coordinates $(\boldsymbol{q},\boldsymbol{p})$, $\boldsymbol{q}\in \mathbb{R}^d$, $\boldsymbol{p}\in\mathbb{R}^d$, where $d$ is the degrees of freedom, the Hamiltonian $\mathcal{H}(\boldsymbol{q},\boldsymbol{p})$ represents total system energy:
\begin{equation}
	\mathcal{H}(\boldsymbol{q},\boldsymbol{p})=\mathcal{U}(\boldsymbol{q})+\mathcal{K}(\boldsymbol{p}), \label{eq:huk}
\end{equation}
where $\mathcal{U}(\boldsymbol{q})$ and $\mathcal{K}(\boldsymbol{p})$ are potential and kinetic energy, respectively.

The Hamiltonian canonical equations is thus defined following
\begin{equation}
	\dot{\boldsymbol{q}}=\frac{\mathrm{d} \boldsymbol{q}}{\mathrm{d} t}=\frac{\partial \mathcal{H}}{\partial \boldsymbol{p}}, \quad \dot{\boldsymbol{p}}=\frac{\mathrm{d} \boldsymbol{p}}{\mathrm{d} t}=-\frac{\partial \mathcal{H}}{\partial \boldsymbol{q}}.  \label{eq:hamilton}
\end{equation}

These equations ensure three fundamental properties:

\emph{Reversibility:} The time-evolution mapping is invertible.

\emph{Energy Conservation:} Total energy remains invariant.

\emph{Symplecticity:} Hamiltonian mechanics preserves volume in $(\boldsymbol{q}, \boldsymbol{p})$ space (\ie Liouville's Theorem). The symplectic gradient $\boldsymbol{S}_{\mathcal{H}}(\boldsymbol{q}, \boldsymbol{p})=(\frac{\partial \mathcal{H}}{\partial \boldsymbol{p}}, -\frac{\partial \mathcal{H}}{\partial \boldsymbol{q}})$ keeps constant energy while evolving the system.

Hamiltonian neural networks (HNNs) learn a parametric function for $\mathcal{H}$ from data instead of computing $\boldsymbol{S}_{\mathcal{H}}$ analytically. During forward pass, the network outputs a scalar energy value. Then the $L_2$ loss enforces Hamiltonian constraints through in-graph gradients:
\begin{equation}
	\mathcal{L}=\left\|\frac{\partial \mathcal{H}_{\boldsymbol{\theta}}}{\partial \boldsymbol{p}}-\frac{\partial \boldsymbol{q}}{\partial t}\right\|_{2}+\left\|\frac{\partial \mathcal{H}_{\boldsymbol{\theta}}}{\partial \boldsymbol{q}}+\frac{\partial \boldsymbol{p}}{\partial t}\right\|_{2}.
\end{equation}

Note that HNNs preserve a quantity close to, but not exactly equivalent to, total energy. This limits their applicability to systems with non-conservative forces, such as friction.

\section{Methodology}

\begin{figure*}[t]
	\centering
	\includegraphics[width=1\textwidth]{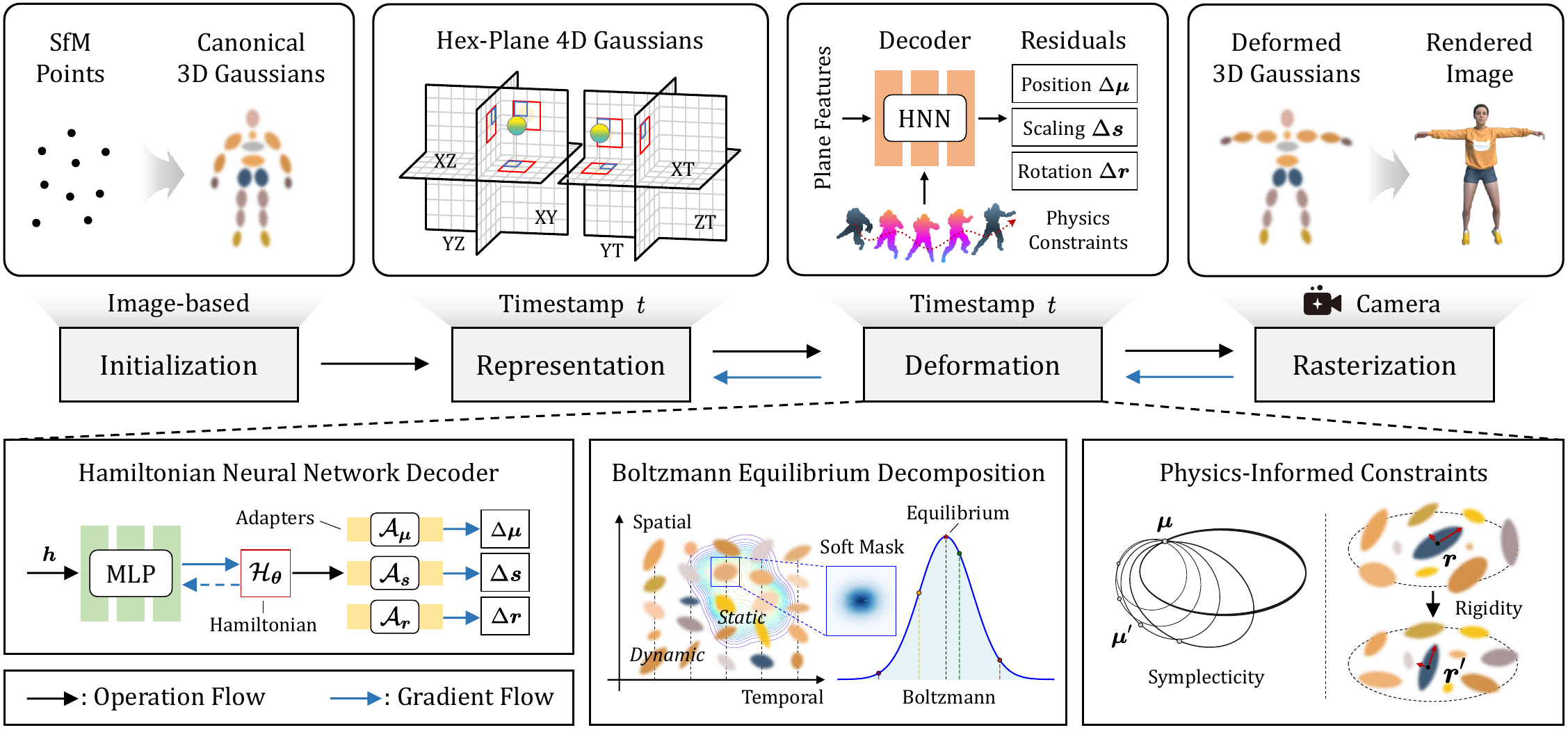}
	\caption{\textbf{Overall pipeline of NeHaD.} (from left to right) An HNN with MLP baseline learns conservation laws from data. Through backpropagation of Hamiltonian gradients, the HNN optimizes vector fields and predicts Gaussian deformations (position, scaling, rotation) via adapters. The Boltzmann equilibrium decomposition decides which primitives should not be deformed with soft masks, \ie smaller deviations from equilibrium maintain static during deformation. Physics-informed constraints including symplectic integration and rigidity regularization are used to preserve system properties.}
	\label{fig:pipeline}
\end{figure*}

\subsection{Neural Hamiltonian Deformation Fields}  \label{sec:hnn}
Existing dynamic Gaussian splatting methods rely on purely data-driven MLP decoders to predict deformation fields, which often leads to physically implausible motions such as abrupt appearance/disappearance of primitives, unrealistic trajectory discontinuities, and violation of energy conservation principles. These limitations arise from the lack of physics-informed inductive biases in conventional neural architectures.

To address this fundamental issue, we propose to replace the standard MLP-based deformation decoder with a HNN that inherently respects the underlying physical laws governing dynamic systems. The key insight driving our approach is that Gaussian primitives naturally exist in a phase space where their positions and momenta evolve according to Hamiltonian dynamics. The overall pipeline of our method is illustrated in Fig. \ref{fig:pipeline}.

Rather than completely abandoning MLPs, our HNN framework leverages MLPs as differentiable baselines while imposing physics-informed structural constraints. HNN optimizes symplectic gradients $\boldsymbol{S}_{\mathcal{H}}$ (mentioned in Sec. \ref{sec:hamilton}) instead of ordinary gradients. The key distinction lies in gradient computation: while traditional MLPs compute gradients only during backpropagation for parameter updates, HNN requires additional forward-mode gradient computation to construct vector fields from scalar potentials.

Given a Gaussian primitive with position $\boldsymbol{\mu}_i$ at timestamp $t$, we extract spatial-temporal features $\boldsymbol{f}_i$ using a hex-plane encoder $\mathcal{E}$ following Eq. \eqref{eq:k-planes}. These features are processed via a differentiable MLP $\mathcal{M}$ composed of linear layers and ReLU activations, with depth $D$ and width $W$, mapping features to latent representations $\boldsymbol{h}_i = \mathcal{M}(\boldsymbol{f}_i) \in \mathbb{R}^W$ that serve as the input to the Hamiltonian dynamics.

Note that the $(\boldsymbol{q}, \boldsymbol{p})$ notation follows Hamiltonian conventions but is not the actual input. Explicitly defining position-momentum coordinates for high-dimensional Gaussian primitives in neural rendering is intractable. Therefore, we employ implicit latent representations as the input to the HNN instead\footnote{The original HNN paper demonstrated the concept on simple systems where the "correct" coordinates (like the angle and angular velocity of a pendulum) are already known. In a complex, high-dimensional scene represented by thousands of Gaussians, defining an explicit $(\boldsymbol{q}, \boldsymbol{p})$ for the entire system is intractable, which is the direct reason why NeHaD uses the implicit features as the input. In fact, our NeHaD extends the standard HNN with the concept of a \emph{learned latent phase space}, where the latent representations $\boldsymbol{h}_i$ act as the generalized coordinates for the dynamical system.}. The hex-plane features $\boldsymbol{f}_i$ encode Gaussian spatial-temporal states containing necessary geometric and kinetic information for phase space construction, hence the latent representations $\boldsymbol{h}_i = \mathcal{M}(\boldsymbol{f}_i)$ can construct generalized coordinates with implicit position-momentum coupling, maintaining information completeness while avoiding $\frac{\mathrm{d} \boldsymbol{q}}{\mathrm{d} t}$ calculation, thereby reducing complexity.

Simulating Hamiltonian dynamics requires modeling a vector field $\boldsymbol{v}$, which can be decomposed into a conservative field $\boldsymbol{v}_{c}$ and a solenoidal field $\boldsymbol{v}_{s}$, \eg $\boldsymbol{v} = \boldsymbol{v}_{c} + \boldsymbol{v}_{s}$, as proven in the Appendix~\ref{sec:proof}. The objective of HNN is to learn these two fields. However, directly learning decomposed vector fields $\boldsymbol{v}_{c}$ and $\boldsymbol{v}_{s}$ through HNN is challenging due to two main issues: 1) the high dimensionality can lead to mode collapse, and 2) there is a lack of energy conservation constraints, as energy is a scalar, making it difficult to ensure that Eq. \eqref{eq:huk} holds during vector field learning. To address these challenges, we instead learn two scalar functions $F_1$ and $F_2$. These functions generate vector fields via automatic differentiation guaranteeing the desired physical properties: $\boldsymbol{v}_{c}$ is inherently conservative ($\nabla \times \boldsymbol{v}_c = \boldsymbol{0}$), preserving energy and generating gradient-based motions, while $\boldsymbol{v}_{s}$ is solenoidal ($\nabla \cdot \boldsymbol{v}_s = 0$), preserving volume and generating rotational motions.

Specifically, the HNN decoder learns two scalar functions $F_1(\boldsymbol{h}_i)$ and $F_2(\boldsymbol{h}_i)$ that generate $\boldsymbol{v}_{c}$ and $\boldsymbol{v}_{s}$ as follows:
\begin{align}
	\boldsymbol{v}_{c} = \nabla_{\boldsymbol{h}_i} &F_1(\boldsymbol{h}_i) \boldsymbol{I}, \quad \boldsymbol{v}_{s} = \nabla_{\boldsymbol{h}_i} F_2(\boldsymbol{h}_i) \boldsymbol{M}^{\top},\label{eq:field}\\
	&\boldsymbol{M} = \begin{bmatrix} \boldsymbol{0}_{d \times d} & \boldsymbol{I}_{d \times d} \\ -\boldsymbol{I}_{d \times d} & \boldsymbol{0}_{d \times d} \end{bmatrix},
\end{align}
where $\boldsymbol{M}$ is the permutation tensor keeping the symplectic structure and $\boldsymbol{I}$ is the identity matrix. We use lightweight attribute-specific adapters $\mathcal{A}_{\boldsymbol{\mu}}$, $\mathcal{A}_{\boldsymbol{s}}$, and $\mathcal{A}_{\boldsymbol{r}}$ (implemented as linear layers) to process HNN-generated vector fields, maintaining standard dimensionality of each Gaussian attribute. Finally, the predicted deformations of each Gaussian attributes are $\Delta \boldsymbol{\mu}_i = \mathcal{A}_{\boldsymbol{\mu}} \boldsymbol{v}$, $\Delta \boldsymbol{s}_i = \mathcal{A}_{\boldsymbol{s}} \boldsymbol{v}$, $\Delta \boldsymbol{r}_i = \mathcal{A}_{\boldsymbol{r}} \boldsymbol{v}$, respectively.

\subsection{Boltzmann Equilibrium Decomposition}  \label{sec:bed}
While HNN ensures physically consistent deformation fields, a critical challenge remains: determining which Gaussian primitives require dynamic modeling versus those that remain static throughout the temporal sequence. Indiscriminate application of deformation to all primitives leads to computational inefficiency and potential artifacts in stable regions. To address this challenge, we introduce Boltzmann Equilibrium Decomposition (BED).

The insight driving BED is that only Gaussian primitives away from their equilibrium states should be activated for deformation. We formalize this through statistical mechanics, constructing soft masks based on Boltzmann energy distributions that adaptively filter out primitives useless to deformation. Our decomposition strategy employs two complementary mechanisms tailored to the distinct visual characteristics of different Gaussian attributes.

\noindent \textbf{Spatial-Temporal Decomposition for Position Dynamics.} Position deformation requires careful spatial-temporal filtering because not all Gaussians need to deform at every moment, and not all spatial locations are equally important at each timestamp. The decomposition pattern exhibits dual selectivity:

\emph{Spatial selectivity.} At any given timestamp, only a subset of Gaussians significantly contribute to the deformation, while others should remain static to maintain scene stability.

\emph{Temporal selectivity.} For a given Gaussian primitive, its activation varies across different timestamps, \ie it may be dormant at certain frames but reactivated when motion patterns require.

We model this dual selectivity by constructing a phase space where each Gaussian primitive's equilibrium state is determined by both its spatial position and temporal context. For the $i$-th Gaussian at position $\boldsymbol{\mu}_i$ and timestamp $t$, the equilibrium deviation is:
\begin{equation}
	\Delta d_i = \frac{\|\boldsymbol{\mu}_i - \boldsymbol{\mu}_{eq}^{(i)}\|_2}{\sigma_s}, \quad \Delta \tau_i = \frac{t - t_{eq}^{(i)}}{\sigma_t},
\end{equation}
where $\boldsymbol{\mu}_{eq}^{(i)}$ and $t_{eq}^{(i)}$ represent the spatial and temporal equilibrium states for the $i$-th Gaussian primitive, respectively, and $\sigma_t$, $\sigma_s$ control the sensitivity scales. The spatial and temporal equilibrium states are both initialized as Gaussian attributes and optimized during training. Distance from equilibrium determines dynamic/static status - larger distances indicate greater deviation from equilibrium, meaning more intense motion, hence more likely to be dynamic primitives. The combined spatial-temporal energy deviation follows a harmonic oscillator model for simplicity:
\begin{equation}
	E_{st}^{(i)} = \frac{1}{2}(\Delta d_i^2 + \Delta \tau_i^2) + \lambda \Delta d_i \Delta \tau_i,
\end{equation}
where the coupling term $\lambda \Delta d_i \Delta \tau_i$ captures the interdependence between spatial and temporal deviations, reflecting that spatial motion patterns are inherently coupled with temporal evolution. The position equilibrium mask follows the Boltzmann distribution:
\begin{equation}
	M_{pos}^{(i)} = (1 - \gamma) \cdot \exp(-\beta E_{st}^{(i)}) + \gamma,
\end{equation}
where $\beta=1/T$ is the inverse temperature that controls the sharpness of the energy distribution, and $\gamma$ ensures minimum dynamic responsiveness to prevent complete deactivation of primitives.

This mask naturally achieves the desired dual selectivity: primitives near equilibrium keep static, while those far from equilibrium are subjected to stronger deformations.

\noindent \textbf{Temporal-Only Decomposition for Scaling Dynamics.} Scaling dynamics exhibit fundamentally different visual characteristics compared to position dynamics. The visual granularity of scaling changes is much smaller than positional changes, primarily affecting surface smoothness and appearance detail rather than global scene structure. Consequently, our decomposition strategy for scaling follows a different principle:

\emph{Spatial universality.} At any given timestamp, nearly all Gaussian primitives should participate in scaling deformation, as scaling contributes to the fine-grained surface details across the entire scene.

\emph{Temporal selectivity.} Different timestamps have varying scaling requirements, \ie some moments benefit from active scaling to enhance surface quality, while others require minimal scaling to preserve texture authenticity.

Based on this analysis, we apply temporal-only decomposition for scaling dynamics, eliminating spatial constraints while maintaining temporal equilibrium-based separation.

For scaling, the temporal energy deviation is modeled as:
\begin{equation}
	E_{t}^{(i)} = \frac{1}{2}\left(\frac{t - t_{eq}^{(i)}}{\sigma_{t}}\right)^2,
\end{equation}
where $t_{eq}^{(i)}$ is initialized and optimized dynamically as an Gaussian attribute, and $\sigma_{t}$ controls the temporal sensitivity around the equilibrium time $t_{eq}^{(i)}$. Then the scaling equilibrium mask becomes:
\begin{equation}
	M_{scale}^{(i)} = (1 - \gamma) \cdot \exp(-\beta E_{t}^{(i)}) + \gamma.
\end{equation}

This temporal-only approach ensures that scaling deformation adapts to the natural temporal rhythm of the scene while maintaining spatial universality for surface detail preservation.

Finally, the BED mechanism integrates into the deformation pipeline through equilibrium-aware blending, applying different decomposition strategies based on attribute characteristics:
\begin{align}
	\boldsymbol{\mu}_i' &= \boldsymbol{\mu}_i + \Delta\boldsymbol{\mu}_i \odot (1 - M_{pos}^{(i)}), \label{eq:position}\\
	\boldsymbol{s}_i' &= \boldsymbol{s}_i + \Delta\boldsymbol{s}_i \odot (1 - M_{scale}^{(i)}),
\end{align}
where $\Delta\boldsymbol{\mu}_i$ and $\Delta\boldsymbol{s}_i$ are the HNN-generated deformation predictions, and $\odot$ is the Hadamard  product. These deformation residuals can be further regularized and refined using specialized physics-informed constraints, which are introduced in the following subsection.

\subsection{Physics-Informed Constraints}  \label{sec:pic}
While the HNN decoder and BED mechanism provide physically consistent deformation predictions and energy-based primitive selection, the temporal evolution of Gaussian attributes requires additional physics-informed constraints to ensure stable and directed dynamics. We introduce two specialized constraints guiding the deformation process: Second-order Symplectic Integration for position dynamics and Local Rigidity Regularization for rotation dynamics.

\noindent \textbf{Second-order Symplectic Integration.} The temporal integration of position dynamics presents a fundamental challenge in physics-based simulation~\cite{sholokhov2023physics}: maintaining energy conservation and system stability over extended time sequences. Standard Euler integration and higher-order Runge-Kutta methods~\cite{ode-solver}, commonly used in neural operators, suffer from energy drift and numerical instability, particularly for systems governed by Hamiltonian mechanics, due to violating symplecticity.

To ensure the long-term stability of our physics-based simulation, we use a symplectic integrator to update the position of each Gaussian. We have adopted the \emph{Position Verlet} integration scheme, which requires both the current velocity and acceleration of a primitive. Specifically, our HNN decoder is structured to provide both of these physical quantities:

\emph{Velocity.} The primary deformation vector, $\Delta \boldsymbol{\mu}_i$, is interpreted as the instantaneous velocity of the $i$-th primitive at a given timestamp.

\emph{Force.} The HNN also learns a latent potential energy landscape as part of its Hamiltonian. The force acting on the $i$-th primitive is the negative gradient of this potential ($-\nabla_{\boldsymbol{q}}\mathcal{U}(\boldsymbol{q}_i)$). This force guides the Gaussian position updates within the symplectic integration, treating the primitives as particles driven toward energy minimization. This regularizes the system and helps enforce conservation laws, even in the presence of dissipative effects. We obtain this force vector $\boldsymbol{F}_i$ directly from the conservative component ($\boldsymbol{v}_c$) of the HNN's output vector field (see Eq. \eqref{eq:field}), which induces a change in the primitive's momentum (\ie acceleration) and is calculated via automatic differentiation.

Assuming unit mass ($m=1$) for each Gaussian primitive\footnote{Newton's second law of motion is expressed mathematically as $\boldsymbol{F} = m \boldsymbol{a}$, where $\boldsymbol{F}$ represents the net force acting on an object, $m$ denotes the object's mass, and $\boldsymbol{a}$ represents the resulting acceleration.}, the force is equivalent to acceleration ($\boldsymbol{a}_i = \boldsymbol{F}_i$). We can then substitute these terms into the standard Position Verlet integration formula:

\begin{equation}
	\tilde{\boldsymbol{\mu}}_{i} = \boldsymbol{\mu}_i + \Delta t \cdot \Delta \boldsymbol{\mu}_i + \frac{(\Delta t)^2}{2} \boldsymbol{F}_i.  \label{eq:leapfrog}
\end{equation}

This formulation bridges our data-driven predictions with a principled integration scheme, ensuring that the position updates of the Gaussians respects the underlying physical laws (\eg symplectic structures) learned by the HNN. Note that Eq. \eqref{eq:leapfrog} should be applied before Eq. \eqref{eq:position}, \ie $\boldsymbol{\mu}_i^{\prime} = \tilde{\boldsymbol{\mu}}_{i} \odot (1 - M_{pos}^{(i)}) + \boldsymbol{\mu}_i \odot M_{pos}^{(i)}$. A detailed description of the algorithm pipeline can be found in the Appendix~\ref{sec:alg}.

\noindent \textbf{Local Rigidity Regularization.} Rotation dynamics in Gaussian Splatting present unique challenges due to the quaternion parameterization and the need to maintain local geometric coherence.
% Unrestricted rotation predictions from neural networks often lead to unrealistic twisting, sudden orientation changes, and violation of local rigidity constraints that govern natural object transformation.

To address these challenges, we implement the local rigidity regularization motivated by As-Rigid-As-Possible (ARAP)~\cite{arap}, which constrains rotation updates to preserve local rigidity while allowing global flexibility. The ARAP principle ensures that local neighborhoods of Gaussian primitives undergo near-rigid transformations, preventing unnatural deformations while maintaining overall scene dynamics. To ensure temporally smooth and geometrically stable rotations, we regularize the rotation updates to prevent unnaturally large changes between time steps. This is achieved by clamping the magnitude of the rotation angle.

For the $i$-th Gaussian primitive with quaternion rotation $\boldsymbol{r}_i=[w_i, x_i, y_i, z_i]^{\top}$, we apply simplified ARAP-constrained rotation updates that do not require minimizing energy functions as in~\cite{arap, sc-gs}. The rotation head of our HNN decoder predicts a unit quaternion, $\Delta \boldsymbol{r}_i$, which represents the change in orientation from time $t$ to $t + \Delta t$. We can decompose this quaternion increment into its scalar and vector components:
\begin{equation}
	\Delta \boldsymbol{r}_i = [\Delta w_i, \Delta \boldsymbol{g}_i]^{\top}, \quad \Delta \boldsymbol{g}_i = [\Delta x_i, \Delta y_i, \Delta z_i]^{\top},
\end{equation}
where $\Delta w_i$ is the scalar part, indicating the rotation magnitude, and $\Delta \boldsymbol{g}_i$ is the vector part, representing the rotation axis scaled by angle. We convert this quaternion into its axis-angle representation to get the total rotation angle $\phi_i = 2 \cdot \mathrm{atan2} (\| \Delta \boldsymbol{g}_i \|, \Delta w_i)$ in practice\footnote{Using ``$\mathrm{atan2}$'' is more numerically stable than ``$\arctan$'', especially for angles greater than 90 degrees, and is the standard practice.}.

To prevent abrupt rotational changes, we apply a smooth limiting function that constrains large rotations while preserving small natural rotations to get a clamped angle, \ie $\phi_i^{\prime} = \phi_{max} \cdot \tanh\left(\frac{\phi_i}{\phi_{max}}\right)$, where $\phi_{max}$ defines the maximum allowable rotation per timestamp, and $\tanh(\cdot)$ provides smooth saturation effect. Further on these steps, the constrained quaternion increment $\Delta \boldsymbol{r}_i^{\prime}$ is constructed as:
\begin{equation}
	\Delta \boldsymbol{r}_i^{\prime} = \left[\cos\left(\frac{\phi_i^{\prime}}{2}\right), \sin\left(\frac{\phi_i^{\prime}}{2}\right) \frac{\Delta \boldsymbol{g}_i}{\|\Delta \boldsymbol{g}_i\|}\right]^{\top},
\end{equation}
and the final rotation is thus obtained through quaternion multiplication followed by normalization $\boldsymbol{r}_i^{\prime} = \mathcal{N}(\boldsymbol{r}_i \otimes \Delta \boldsymbol{r}_i^{\prime})$, where $\otimes$ and $\mathcal{N}(\cdot)$ denote quaternion multiplication and normalization.

\section{Experiments}

\begin{figure}[t]
	\centering
	\includegraphics[width=0.98\columnwidth]{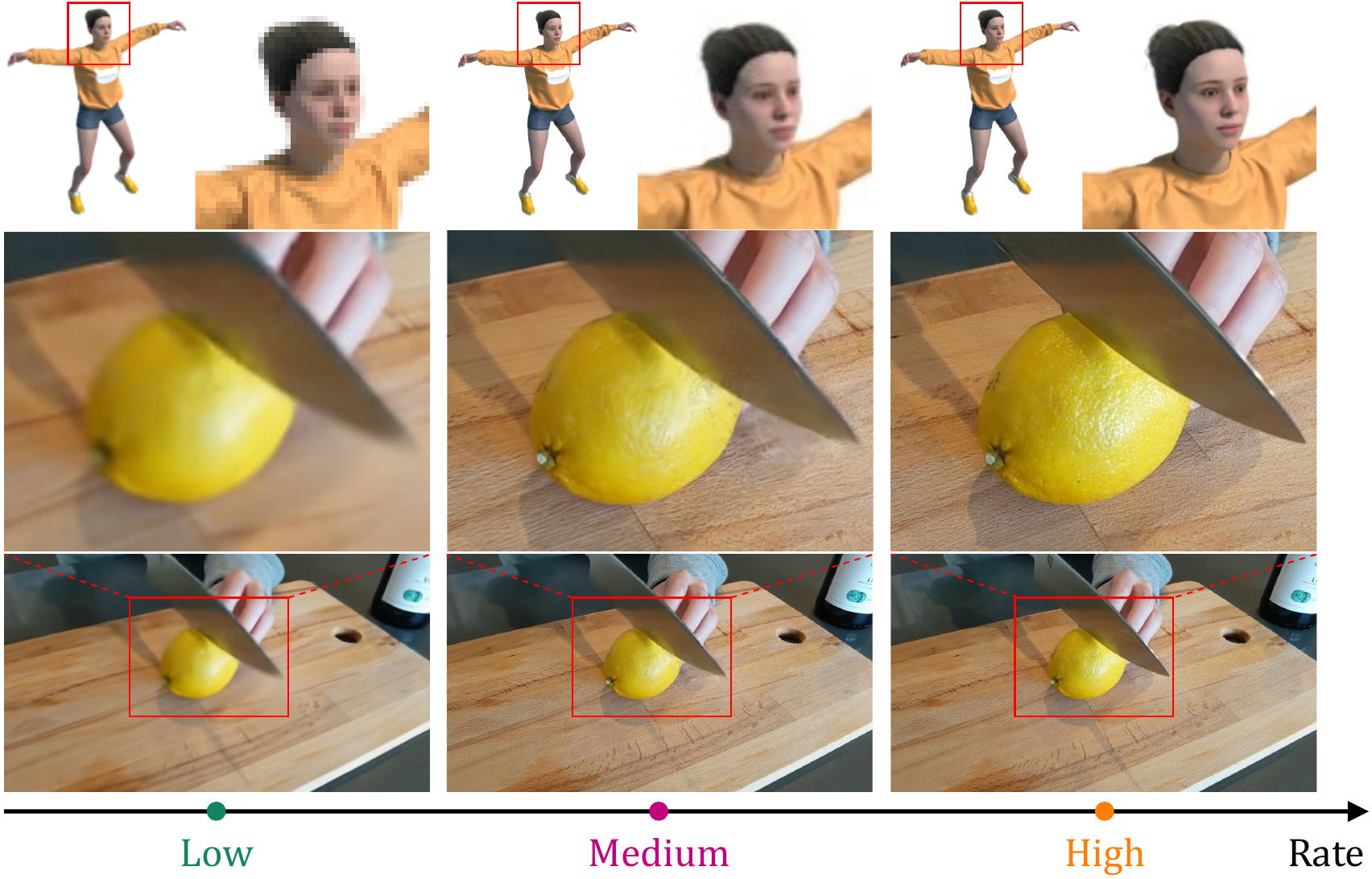}
	\caption{\textbf{Qualitative results of adaptive streaming.} NeHaD adapts effectively to progressive streaming across varying allocated rates.}
	\label{fig:streaming-results}
\end{figure}

\begin{table}[t]
	\caption{\textbf{Quantitative results.} Averaged metrics over all scenes in the respective datasets. The best and second-best results are indicated in \textbf{bold} and \underline{underlined}, respectively.}
	\label{tbl:results}
	\centering
	\small
	\begin{minipage}{\columnwidth}
		\centering
		\setlength{\tabcolsep}{4.5pt}
		\begin{tabular}{l|ccc|c|c}
			\toprule
			\multicolumn{6}{c}{D-NeRF~\cite{dnerf} (monocular, synthetic, 800$\times$800)} \\
			\midrule
			Method & PSNR $\uparrow$ & SSIM $\uparrow$ & LPIPS $\downarrow$ & Train Time $\downarrow$ & FPS $\uparrow$\\
			\midrule
			D-NeRF & 29.68 & 0.947 & 0.058 & 48\hour & $<$1\\
			TiNeuVox & 32.74 & 0.972 & 0.051 & 28\minute & 1.5\\
			K-Planes & 31.52 & 0.967 & 0.047 & 52\minute & 0.97\\
			HexPlane & 31.04 & 0.97 & 0.04 & \bf{11.5\minute} & 2.5\\ 
			4DGS & 35.34 & 0.985 & 0.021 & \underline{20\minute} & \underline{82}\\
			SC-GS & \underline{40.26} & \underline{0.992} & \underline{0.009} & 30\minute & \bf{164}\\
			\cmidrule{1-6}
			\rowcolor{gray!10} Ours & \bf{40.91} & \bf{0.995} & \bf{0.008} & 24\minute & 62\\
			\midrule
			
			\multicolumn{6}{c}{HyperNeRF~\cite{hypernerf} (monocular, real-world, 536$\times$960)} \\
			\midrule
			Method & PSNR $\uparrow$ & MS-SSIM $\uparrow$ & LPIPS $\downarrow$ & Train Time $\downarrow$ & FPS $\uparrow$\\
			\midrule
			HyperNeRF & 22.41 & 0.814 & 0.131 & 32\hour & $<$1\\
			TiNeuVox & 24.20 & 0.836 & 0.128 & \bf{30\minute} & 1\\ 
			4DGS & 25.24 & 0.845 & 0.116 & \underline{34\minute} & 32\\
			DeformGS & 25.02 & 0.822 & 0.116 & 1.5\hour & 13\\
			SaRO-GS & 25.38 & 0.850 & 0.110 & 1.2\hour & \underline{34}\\
			Grid4D & \underline{25.50} & \underline{0.856} & \underline{0.107} & 2.5\hour & \bf{37}\\
			\cmidrule{1-6}
			\rowcolor{gray!10} Ours & \bf{25.69} & \bf{0.858} & \bf{0.104} & 45\minute & 25\\
			\midrule
			
			\multicolumn{6}{c}{DyNeRF~\cite{dynerf} (multi-view, real-world, 1352$\times$1014)} \\
			\midrule
			Method & PSNR $\uparrow$ & D-SSIM $\downarrow$ & LPIPS $\downarrow$ & Train Time $\downarrow$ & FPS $\uparrow$\\
			\midrule
			DyNeRF & 29.58 & 0.020 & 0.083 & 1344\hour & $<$1\\
			HexPlane & 31.70 & 0.014 & 0.075 & 12\hour & 0.2\\
			4DGS & 31.17 & 0.016 & 0.049 & \underline{42\minute} & 30\\
			STG & 32.05 & 0.014 & 0.044 & 10\hour & \underline{110}\\
			SaRO-GS & 32.15 & 0.014 & 0.044 & 1.5\hour & 32\\
			Swift4D & \underline{32.23} & \underline{0.014} & \underline{0.043} & \bf{25\minute} & \bf{125}\\
			\cmidrule{1-6}
			\rowcolor{gray!10} Ours & \bf{32.35} & \bf{0.013} & \bf{0.042} & 50\minute & 21\\
			\bottomrule
		\end{tabular}
	\end{minipage}
\end{table}

\subsection{Experimental Setup}

\noindent \textbf{Datasets.} We evaluate on both monocular and multi-view dynamic scene datasets, including synthetic and real-world scenes:

\emph{Synthetic Dataset.} We use D-NeRF~\cite{dnerf} for synthetic scene evaluation rendering at 800$\times$800 resolution.

\emph{Real-World Datasets.} We use HyperNeRF~\cite{hypernerf} and DyNeRF~\cite{dynerf} for real-world scene evaluation. Camera poses are estimated using COLMAP~\cite{colmap}. We report quantitative results on HyperNeRF's ``vrig'' scenes at 536$\times$960 resolution and DyNeRF at 1352$\times$1014 resolution.

\noindent \textbf{Hyperparameters.} Our hyperparameter settings largely follow 4DGS~\cite{4dgs}. More experimental implementation details are summarized in Appendix~\ref{sec:detail}.

\noindent \textbf{Loss Function.} Following previous work~\cite{tineuvox, gaussian-splatting, 4dgs}, we supervise training using color loss $\mathcal{L}_{1}$ and structure loss $\mathcal{L}_{DSSIM}$ between the rendered and the ground-truth images. Additionally, a grid-based total variational loss~\cite{4dgs, hexplane} $\mathcal{L}_{TV}$ is applied:
\begin{equation}
	\mathcal{L}_{total} = (1-\lambda) \mathcal{L}_{1} + \lambda \mathcal{L}_{DSSIM} + \mathcal{L}_{TV},
\end{equation}
where $\lambda$, weighting between $\mathcal{L}_{1}$ and $\mathcal{L}_{DSSIM}$, is set to 0.2.

\subsection{Applications: Streaming}
To evaluate the practical utility of NeHaD in graphics applications, we extend it to adaptive streaming for VR requirements, which is challenging in bandwidth-constrained environments. We adapt NeHaD to streaming through two enhancements: (a) scale-aware anisotropic MipMapping for efficient multi-level texture sampling, and (b) layered progressive optimization for global level-of-detail (LOD) rendering. More details of streaming methodology are summarized in Appendix~\ref{sec:stream}. Qualitative results for adaptive streaming are illustrated in Fig. \ref{fig:streaming-results}.

\subsection{Experimental Results}
We compare NeHaD with state-of-the-art models on three datasets: the synthetic D-NeRF, the monocular\footnote{Monocular here refers to having only one viewpoint at any given time.} real-world HyperNeRF, and the multi-view real-world DyNeRF.

\noindent \textbf{Qualitative Comparisons.} Qualitative results for D-NeRF, HyperNeRF, and DyNeRF datasets are shown in Fig. \ref{fig:trajectory}, Fig. \ref{fig:dnerf}, Fig. \ref{fig:hypernerf}, and Fig. \ref{fig:dynerf}, respectively. Our method consistently outperforms baseline models in visual perception, delivering realistic renderings with coherent dynamic motion.

\noindent \textbf{Quantitative Comparisons.} Quantitative results are summarized in Tab. \ref{tbl:results}. NeHaD outperforms state-of-the-art methods across all quality evaluation metrics. While not matching the rendering speed of the most efficient methods like \cite{spacetime-gaussians} in real-world scenes, NeHaD maintains over 20 FPS, which is acceptable given its improved visual quality.

\subsection{Ablation Study}
We conduct ablation studies on the synthetic D-NeRF dataset. Tab. \ref{tbl:ablation} and Fig. \ref{fig:ablation} present our ablation results, with both quantitative and qualitative evaluations confirming the effectiveness of each proposed module in NeHaD.

\noindent \textbf{Ablation of Hamiltonian Neural Network Decoder.} The baseline without HNN decoder produces blurred and over-smoothed rendering results, likely due to low-rank assumptions and inadequate dynamics modeling. Detailed examination reveals Gaussian primitives positioned irregularly across surfaces, indicating the deformation field lacks physics-based inductive biases for unsupervised motion learning. Incorporating the HNN decoder improves visual quality metrics by approximately 6.7\% over the baseline.

\noindent \textbf{Ablation of Boltzmann Equilibrium Decomposition.} Without BED to separate static and dynamic Gaussians, rendering results exhibit noticeable artifacts in motion regions with mixed textures, significantly reducing quality metrics.

\noindent \textbf{Ablation of Physics-Informed Constraints.} Quantitative results confirm the effectiveness of physical constraints. Similar to findings in \cite{4dgs}, position deformation contributes most significantly to overall Gaussian deformation, with its physics regularization providing guidance that reduces motion trajectory ambiguity. Meanwhile, local rigidity regularization on rotation enhances motion stability through constrained rotation magnitudes.

\begin{table}[t]
	\caption{\textbf{Ablation study.} We start with 4DGS~\cite{4dgs} and improve it with proposed components on the synthetic dataset, respectively.}
	\label{tbl:ablation}
	\centering
	\small
	\setlength{\tabcolsep}{6pt}
	\begin{tabular}{rl|ccc}
		\toprule
		& Model configuration & PSNR $\uparrow$ & SSIM $\uparrow$ & LPIPS $\downarrow$ \\
		\midrule
		& Baseline model & 35.34 & 0.985 & 0.021\\
		\midrule
		+ & Hamiltonian neural network & 37.69 & 0.989 & 0.012\\
		\midrule
		\multirow{2.5}{*}{\rotatebox[origin=c]{90}{+ BED}} & Spatial-temporal decomposition & 37.46 & 0.989 & 0.014\\ \cmidrule{2-5}
		& Temporal-only decomposition & 36.08 & 0.986 & 0.019\\
		\midrule
		\multirow{2.5}{*}{\rotatebox[origin=c]{90}{+ PIC}} & Symplectic integration & 36.53 & 0.986 & 0.018\\ \cmidrule{2-5}
		& Local rigidity regularization & 36.04 & 0.986 & 0.020\\
		\midrule
		\rowcolor{gray!10} & Proposed model & 40.91 & 0.995 & 0.008\\
		\bottomrule
	\end{tabular}
\end{table}

\begin{figure}[t]
	\centering
	\includegraphics[width=1\columnwidth]{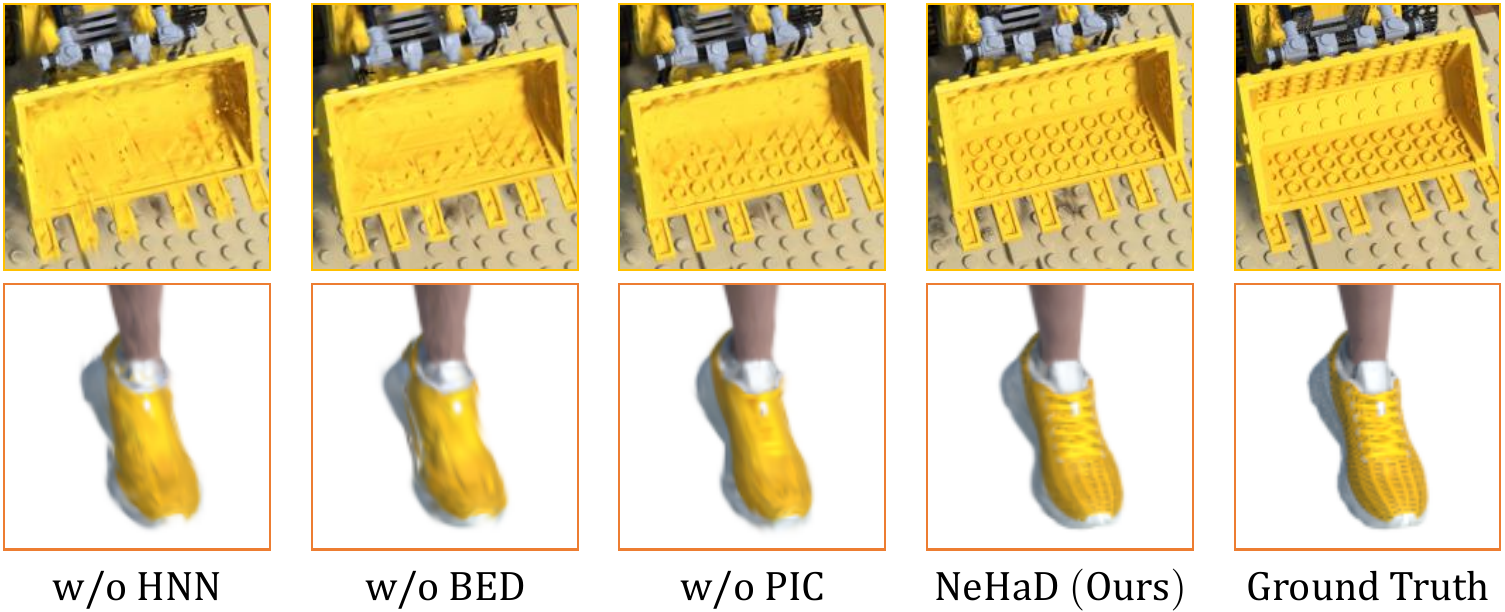}
	\caption{\textbf{Ablation visualization.} Without the proposed modules, rendering results exhibit motion artifacts and visual distortions. In contrast, our complete NeHaD model produces significantly higher quality results, demonstrating the effectiveness of our approach.}
	\label{fig:ablation}
\end{figure}

\begin{figure}[t]
	\centering
	\includegraphics[width=1\columnwidth]{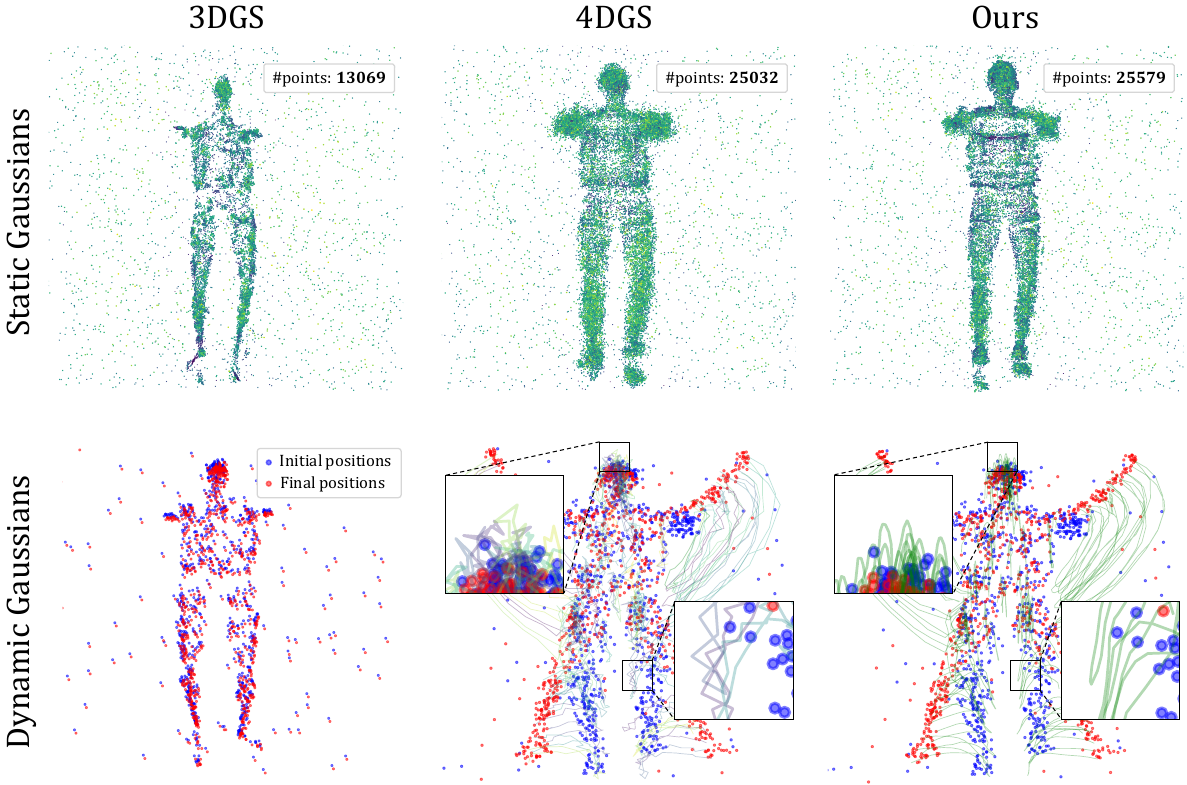}
	\caption{\textbf{Visualization of Gaussian spatial distributions and motion trajectories during deformation.} Compared to baseline~\cite{4dgs} using MLP-based decoders, our HNN-based approach inherently respects Hamiltonian mechanics principles, resulting in more directed, ordered, and natural movements instead of relying solely on stochastic optimization.}
	\label{fig:trajectory}
\end{figure}

\begin{figure*}[t]
	\centering
	\includegraphics[width=0.98\textwidth]{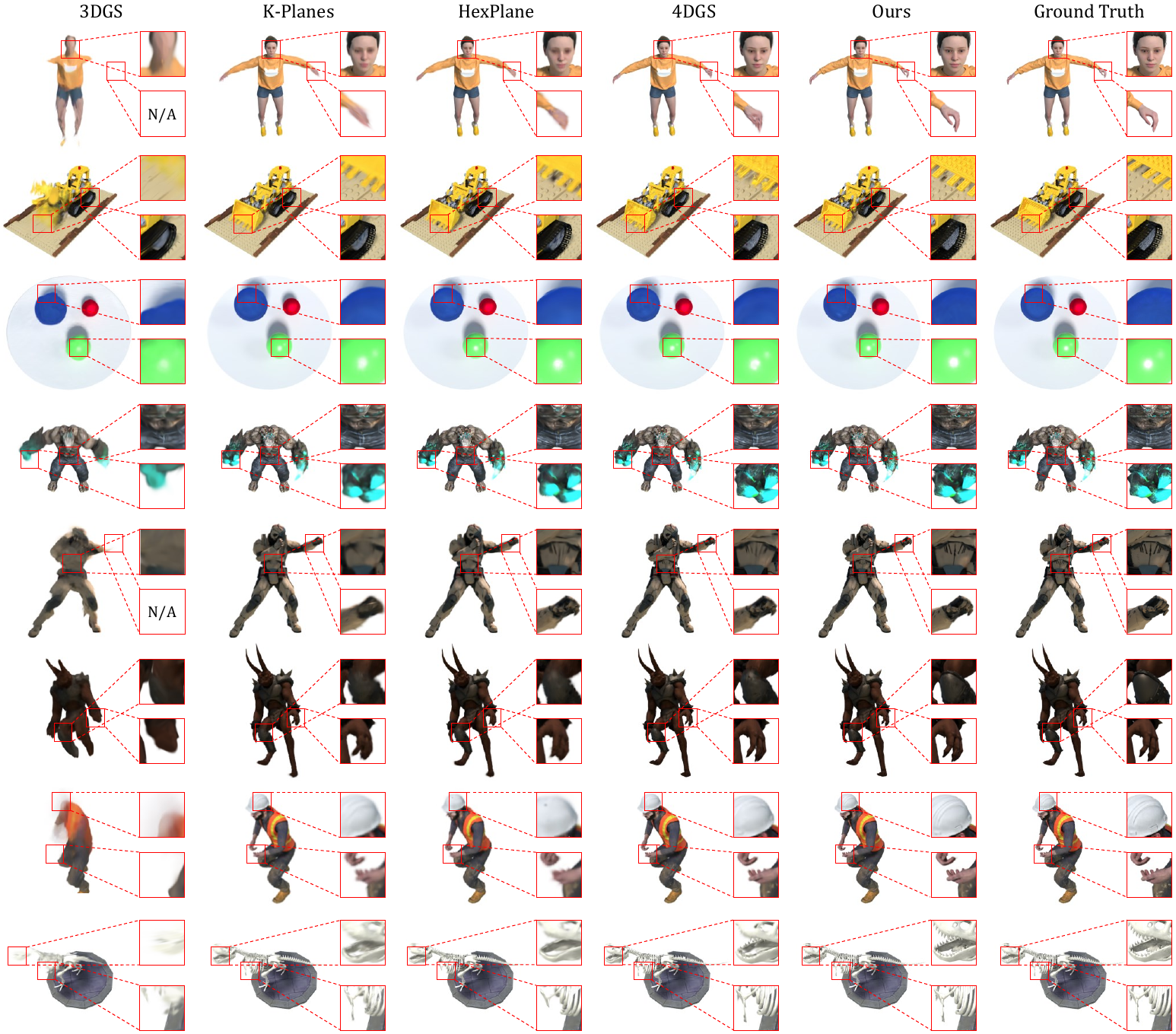}
	\caption{Qualitative results on D-NeRF~\cite{dnerf} dataset (N/A: not available).}
	\label{fig:dnerf}
\end{figure*}

\begin{figure*}[t]
	\centering
	\includegraphics[width=0.98\textwidth]{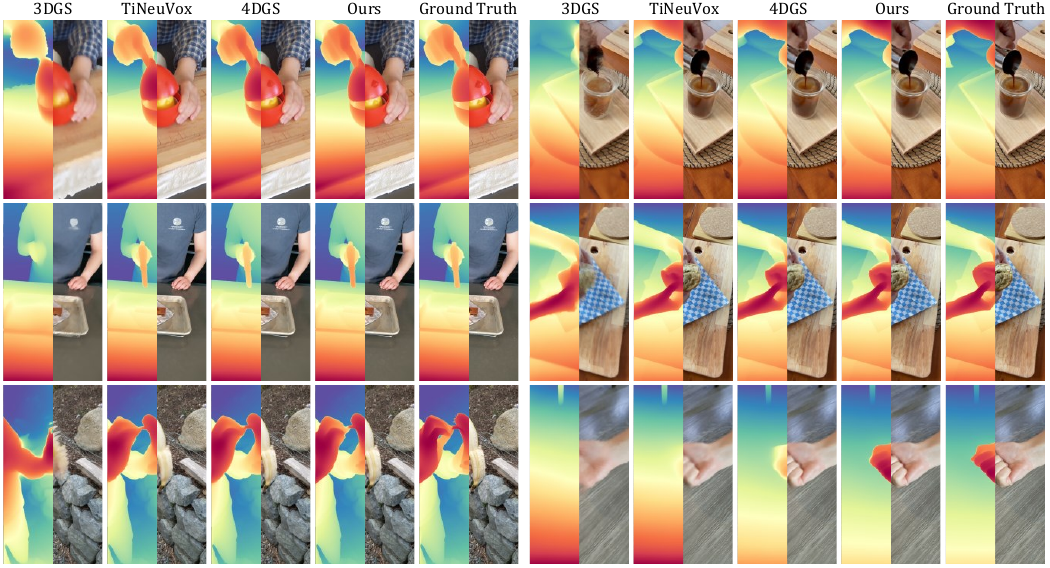}
	\caption{Qualitative results on HyperNeRF~\cite{hypernerf} dataset (depth maps are visualized for better comparisons).}
	\label{fig:hypernerf}
\end{figure*}

\begin{figure*}[t]
	\centering
	\includegraphics[width=0.98\textwidth]{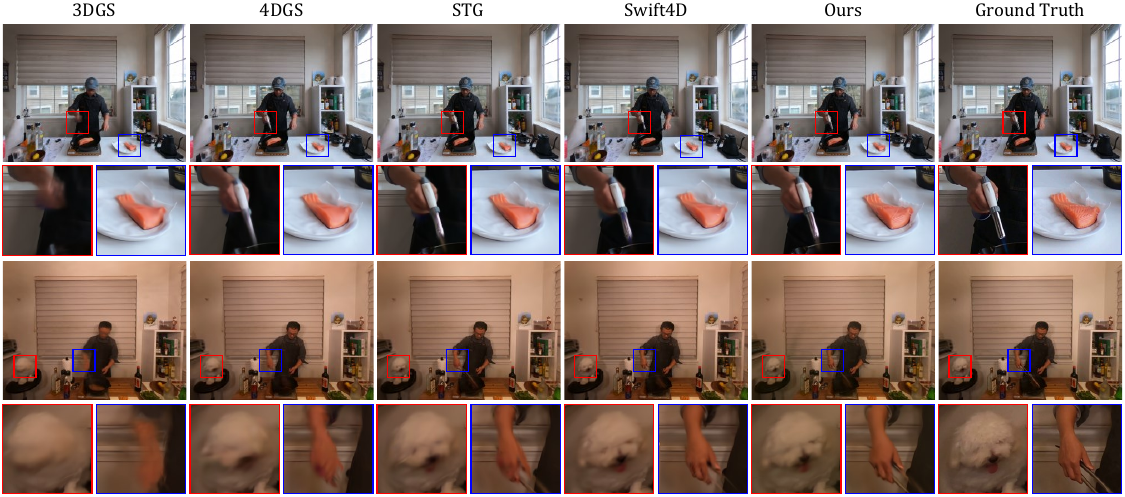}
	\caption{Qualitative results on DyNeRF~\cite{dynerf} dataset (\textit{flame salmon} and \textit{sear steak} scenes).}
	\label{fig:dynerf}
\end{figure*}

\section{Conclusion}
In this paper, we introduce NeHaD, a neural deformation field for dynamic Gaussian Splatting guided by Hamiltonian mechanics. Our method decomposes 4D spatial-temporal representations using Boltzmann equilibrium decomposition and employs an HNN-based decoder to predict Gaussian deformations with physics-informed constraints that ensure robustness against scene perturbations. The approach extends naturally to bandwidth-constrained streaming applications with minimal modifications. Extensive experiments demonstrate that our method achieves improved rendering reality.

\noindent \textbf{Limitation.} While improvements achieved, our method also has limitations. First, specialized regularization for real-world scenes may conflict with ground-truth dynamics, especially under occlusion, large dissipation, or fluid deformation. Second, learning Hamiltonian introduces extra computational overhead, leading to reduced real-time rendering performance. Future work could explore more flexible neural operators and efficient rendering pipelines.

\begin{acks}
The authors thank all the reviewers for their useful suggestions. This work is supported in part by the National Key Research and Development Program of China under Grant 2024YFF0509700, in part by the National Natural Science Foundation of China under Grant 62371063, Grant	62293481, Grant 62321001, Grant 92267301, in part by the Beijing Municipal Natural Science Foundation under Grant L232047, and sponsored by Beijing Nova Program.
\end{acks}

% Bibliography
\bibliographystyle{ACM-Reference-Format}
\bibliography{reference}

% Appendix
\appendix
\section{Appendix}

\subsection{Theorem Proof} \label{sec:proof}
\begin{theorem}[Helmholtz Decomposition Theorem]
	Let $\boldsymbol{F}: \mathbb{R}^3 \to \mathbb{R}^3$ be a sufficiently smooth vector field that decays rapidly at infinity. Then $\boldsymbol{F}$ admits a unique decomposition:
	\begin{equation}
		\boldsymbol{F}(\boldsymbol{r}) = \boldsymbol{F}_{\text{conservative}}(\boldsymbol{r}) + \boldsymbol{F}_{\text{solenoidal}}(\boldsymbol{r})
	\end{equation}
	where $\boldsymbol{F}_{\text{conservative}}$ is irrotational ($\nabla \times \boldsymbol{F}_{\text{conservative}} = \boldsymbol{0}$) and $\boldsymbol{F}_{\text{solenoidal}}$ is divergence-free ($\nabla \cdot \boldsymbol{F}_{\text{solenoidal}} = 0$).
\end{theorem}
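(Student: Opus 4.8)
The plan is to establish the Helmholtz decomposition via the Newtonian potential, following the classical constructive route. First I would introduce the two scalar and vector potentials explicitly: define the scalar potential $\phi(\boldsymbol{r}) = -\frac{1}{4\pi}\int_{\mathbb{R}^3} \frac{\nabla' \cdot \boldsymbol{F}(\boldsymbol{r}')}{\|\boldsymbol{r} - \boldsymbol{r}'\|}\,\mathrm{d}^3 r'$ and the vector potential $\boldsymbol{A}(\boldsymbol{r}) = \frac{1}{4\pi}\int_{\mathbb{R}^3} \frac{\nabla' \times \boldsymbol{F}(\boldsymbol{r}')}{\|\boldsymbol{r} - \boldsymbol{r}'\|}\,\mathrm{d}^3 r'$, and then set $\boldsymbol{F}_{\text{conservative}} = -\nabla \phi$ and $\boldsymbol{F}_{\text{solenoidal}} = \nabla \times \boldsymbol{A}$. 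These candidates are well-defined precisely because $\boldsymbol{F}$ is smooth and decays rapidly at infinity, so the integrals converge and differentiation under the integral sign is justified.

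The core identity to invoke is the distributional Laplacian of the Newtonian kernel, $\nabla^2 \left( \frac{-1}{4\pi \|\boldsymbol{r} - \boldsymbol{r}'\|} \right) = \delta^3(\boldsymbol{r} - \boldsymbol{r}')$. Applying $\nabla^2$ componentwise to the convolution $\boldsymbol{G}(\boldsymbol{r}) = -\frac{1}{4\pi}\int \frac{\boldsymbol{F}(\boldsymbol{r}')}{\|\boldsymbol{r}-\boldsymbol{r}'\|}\,\mathrm{d}^3 r'$ yields $\nabla^2 \boldsymbol{G} = \boldsymbol{F}$. Then I would use the vector-calculus identity $\nabla^2 \boldsymbol{G} = \nabla(\nabla \cdot \boldsymbol{G}) - \nabla \times (\nabla \times \boldsymbol{G})$, which rearranges directly to $\boldsymbol{F} = \nabla(\nabla \cdot \boldsymbol{G}) - \nabla \times (\nabla \times \boldsymbol{G})$. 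Setting $\phi = -\nabla \cdot \boldsymbol{G}$ and $\boldsymbol{A} = \nabla \times \boldsymbol{G}$ recovers the stated decomposition; the irrotationality $\nabla \times (\nabla \phi) = \boldsymbol{0}$ and the divergence-free property $\nabla \cdot (\nabla \times \boldsymbol{A}) = 0$ are then automatic from standard identities. A short integration-by-parts argument (discarding boundary terms at infinity by the decay hypothesis) reconciles this form of $\phi$ and $\boldsymbol{A}$ with the source-integral expressions above.

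For uniqueness, I would suppose two decompositions $\boldsymbol{F} = \nabla \phi_1 + \nabla \times \boldsymbol{A}_1 = \nabla \phi_2 + \nabla \times \boldsymbol{A}_2$ and look at the difference $\boldsymbol{W} = \nabla(\phi_1 - \phi_2) = \nabla \times (\boldsymbol{A}_2 - \boldsymbol{A}_1)$, which is simultaneously curl-free and divergence-free, hence harmonic componentwise. Invoking a Liouville-type argument — a harmonic vector field on $\mathbb{R}^3$ that decays at infinity must vanish identically — forces $\boldsymbol{W} = \boldsymbol{0}$, so the irrotational and solenoidal parts are uniquely determined (the potentials themselves being unique only up to the usual gauge freedom, which does not affect $\boldsymbol{F}_{\text{conservative}}$ and $\boldsymbol{F}_{\text{solenoidal}}$).

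The main obstacle I anticipate is making the analytic hypotheses do real work rather than hand-waving: justifying differentiation under the integral sign, handling the singularity of the kernel at $\boldsymbol{r} = \boldsymbol{r}'$ (which requires either a distributional argument or an explicit limiting/excision procedure around a small ball), and pinning down exactly what ``decays rapidly at infinity'' must mean for both the convergence of the Newtonian integrals and the vanishing of boundary terms in the integration by parts, as well as for the Liouville step in the uniqueness proof. In the context of this paper it suffices to state these regularity conditions cleanly and cite the standard distributional identity for $\nabla^2(1/\|\boldsymbol{r}\|)$, treating the rigorous singularity analysis as classical.
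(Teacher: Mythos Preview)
Your proposal is correct and follows essentially the same route as the paper's proof: both convolve $\boldsymbol{F}$ with the Newtonian kernel to form $\boldsymbol{G}$, invoke $\nabla^2(1/|\boldsymbol{r}-\boldsymbol{r}'|) = -4\pi\delta(\boldsymbol{r}-\boldsymbol{r}')$ to get $\nabla^2\boldsymbol{G} \propto \boldsymbol{F}$, apply the vector identity $\nabla^2\boldsymbol{G} = \nabla(\nabla\cdot\boldsymbol{G}) - \nabla\times(\nabla\times\boldsymbol{G})$ to split off the two pieces, check curl-free/divergence-free via the standard identities, integrate by parts (dropping boundary terms by decay) to reach the source-integral forms, and prove uniqueness by showing the difference is harmonic and vanishes at infinity. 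The only cosmetic difference is that you introduce the source-integral potentials $\phi,\boldsymbol{A}$ up front and reconcile them with $\boldsymbol{G}$ afterward, whereas the paper starts from $\boldsymbol{G}$ and derives the source integrals last; your added remarks about justifying differentiation under the integral and handling the kernel singularity are welcome caveats that the paper simply takes for granted.
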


\begin{proof}
	
	We begin with the integral identity for any vector field $\boldsymbol{F}(\boldsymbol{r})$ using the Dirac delta function:
	\begin{equation}
		\boldsymbol{F}(\boldsymbol{r}) = \int_V \boldsymbol{F}(\boldsymbol{r}') \delta(\boldsymbol{r} - \boldsymbol{r}') \, \mathrm{d}^3\boldsymbol{r}'
	\end{equation}
	where $V$ is the integration volume containing the observation point $\boldsymbol{r}$.
	
	To proceed, we employ the fundamental identity relating the Dirac delta function to the inverse Laplacian:
	\begin{equation}
		\delta(\boldsymbol{r} - \boldsymbol{r}') = -\frac{1}{4\pi} \nabla^2 \left(\frac{1}{|\boldsymbol{r} - \boldsymbol{r}'|}\right)
	\end{equation}
	
	Substituting this identity into the integral representation yields:
	\begin{equation}
		\boldsymbol{F}(\boldsymbol{r}) = -\frac{1}{4\pi} \int_V \boldsymbol{F}(\boldsymbol{r}') \nabla^2 \left(\frac{1}{|\boldsymbol{r} - \boldsymbol{r}'|}\right) \, \mathrm{d}^3\boldsymbol{r}'
	\end{equation}
	
	Since the Laplacian operator $\nabla^2$ acts on the unprimed coordinates $\boldsymbol{r}$, we can move it outside the integral:
	\begin{equation}
		\boldsymbol{F}(\boldsymbol{r}) = -\frac{1}{4\pi} \nabla^2 \int_V \frac{\boldsymbol{F}(\boldsymbol{r}')}{|\boldsymbol{r} - \boldsymbol{r}'|} \, \mathrm{d}^3\boldsymbol{r}'
	\end{equation}
	
	Now we apply the fundamental vector calculus identity to decompose the vector Laplacian:
	\begin{equation}
		\nabla^2 \boldsymbol{A} = \nabla(\nabla \cdot \boldsymbol{A}) - \nabla \times (\nabla \times \boldsymbol{A})
	\end{equation}
	
	Let $\boldsymbol{G}(\boldsymbol{r}) = \int_V \frac{\boldsymbol{F}(\boldsymbol{r}')}{|\boldsymbol{r} - \boldsymbol{r}'|} \, \mathrm{d}^3\boldsymbol{r}'$. Then:
	\begin{equation}
		\nabla^2 \boldsymbol{G} = \nabla(\nabla \cdot \boldsymbol{G}) - \nabla \times (\nabla \times \boldsymbol{G})
	\end{equation}
	
	Substituting this decomposition into our expression gives:
	\begin{align}
		\boldsymbol{F}(\boldsymbol{r}) &= -\frac{1}{4\pi} [\nabla(\nabla \cdot \boldsymbol{G}) - \nabla \times (\nabla \times \boldsymbol{G})] \\
		&= -\frac{1}{4\pi} \nabla(\nabla \cdot \boldsymbol{G}) + \frac{1}{4\pi} \nabla \times (\nabla \times \boldsymbol{G})
	\end{align}
	
	Expanding $\boldsymbol{G}$ explicitly:
	\begin{align}
		\boldsymbol{F}(\boldsymbol{r}) = &-\frac{1}{4\pi} \nabla\left(\nabla \cdot \int_V \frac{\boldsymbol{F}(\boldsymbol{r}')}{|\boldsymbol{r} - \boldsymbol{r}'|} \, \mathrm{d}^3\boldsymbol{r}'\right) \\
		&+ \frac{1}{4\pi} \nabla \times \left(\nabla \times \int_V \frac{\boldsymbol{F}(\boldsymbol{r}')}{|\boldsymbol{r} - \boldsymbol{r}'|} \, \mathrm{d}^3\boldsymbol{r}'\right)
	\end{align}
	
	This naturally leads us to identify the conservative and solenoidal components as:
	\begin{align}
		\boldsymbol{F}_c(\boldsymbol{r}) &= -\frac{1}{4\pi} \nabla\left(\nabla \cdot \int_V \frac{\boldsymbol{F}(\boldsymbol{r}')}{|\boldsymbol{r} - \boldsymbol{r}'|} \, \mathrm{d}^3\boldsymbol{r}'\right) \label{eq:conservative} \\
		\boldsymbol{F}_s(\boldsymbol{r}) &= \frac{1}{4\pi} \nabla \times \left(\nabla \times \int_V \frac{\boldsymbol{F}(\boldsymbol{r}')}{|\boldsymbol{r} - \boldsymbol{r}'|} \, \mathrm{d}^3\boldsymbol{r}'\right) \label{eq:solenoidal}
	\end{align}
	
	Thus, we have established that $\boldsymbol{F}(\boldsymbol{r}) = \boldsymbol{F}_c(\boldsymbol{r}) + \boldsymbol{F}_s(\boldsymbol{r})$.
	
	To verify the conservative nature of $\boldsymbol{F}_c$, we compute its curl:
	\begin{align}
		\nabla \times \boldsymbol{F}_c &= \nabla \times \left[-\frac{1}{4\pi} \nabla\left(\nabla \cdot \int_V \frac{\boldsymbol{F}(\boldsymbol{r}')}{|\boldsymbol{r} - \boldsymbol{r}'|} \, \mathrm{d}^3\boldsymbol{r}'\right)\right] \\
		&= -\frac{1}{4\pi} \nabla \times \nabla\left(\nabla \cdot \int_V \frac{\boldsymbol{F}(\boldsymbol{r}')}{|\boldsymbol{r} - \boldsymbol{r}'|} \, \mathrm{d}^3\boldsymbol{r}'\right)
	\end{align}
	
	By the fundamental vector calculus identity $\nabla \times \nabla \Phi = \boldsymbol{0}$ for any scalar field $\Phi$:
	\begin{equation}
		\nabla \times \boldsymbol{F}_c = \boldsymbol{0}
	\end{equation}
	
	Similarly, to verify the solenoidal property of $\boldsymbol{F}_s$, we compute its divergence:
	\begin{align}
		\nabla \cdot \boldsymbol{F}_s &= \nabla \cdot \left[\frac{1}{4\pi} \nabla \times \left(\nabla \times \int_V \frac{\boldsymbol{F}(\boldsymbol{r}')}{|\boldsymbol{r} - \boldsymbol{r}'|} \, \mathrm{d}^3\boldsymbol{r}'\right)\right] \\
		&= \frac{1}{4\pi} \nabla \cdot \nabla \times \left(\nabla \times \int_V \frac{\boldsymbol{F}(\boldsymbol{r}')}{|\boldsymbol{r} - \boldsymbol{r}'|} \, \mathrm{d}^3\boldsymbol{r}'\right)
	\end{align}
	
	By the fundamental vector calculus identity $\nabla \cdot (\nabla \times \boldsymbol{A}) = 0$ for any vector field $\boldsymbol{A}$:
	\begin{equation}
		\nabla \cdot \boldsymbol{F}_s = 0
	\end{equation}
	
	To derive the explicit integral form, we use the vector identity $\nabla \cdot (\Phi \boldsymbol{A}) = \boldsymbol{A} \cdot \nabla \Phi + \Phi \nabla \cdot \boldsymbol{A}$ and note that $\boldsymbol{F}(\boldsymbol{r}')$ depends only on primed coordinates:
	\begin{equation}
		\nabla \cdot \left(\frac{\boldsymbol{F}(\boldsymbol{r}')}{|\boldsymbol{r} - \boldsymbol{r}'|}\right) = \boldsymbol{F}(\boldsymbol{r}') \cdot \nabla \left(\frac{1}{|\boldsymbol{r} - \boldsymbol{r}'|}\right)
	\end{equation}
	
	Since $\nabla \left(\frac{1}{|\boldsymbol{r} - \boldsymbol{r}'|}\right) = -\nabla' \left(\frac{1}{|\boldsymbol{r} - \boldsymbol{r}'|}\right) = -\frac{\boldsymbol{r} - \boldsymbol{r}'}{|\boldsymbol{r} - \boldsymbol{r}'|^3}$:
	\begin{equation}
		\nabla \cdot \left(\frac{\boldsymbol{F}(\boldsymbol{r}')}{|\boldsymbol{r} - \boldsymbol{r}'|}\right) = -\boldsymbol{F}(\boldsymbol{r}') \cdot \frac{\boldsymbol{r} - \boldsymbol{r}'}{|\boldsymbol{r} - \boldsymbol{r}'|^3}
	\end{equation}
	
	Using the identity $\nabla' \cdot \left(\frac{\boldsymbol{F}(\boldsymbol{r}')}{|\boldsymbol{r} - \boldsymbol{r}'|}\right) = \frac{\nabla' \cdot \boldsymbol{F}(\boldsymbol{r}')}{|\boldsymbol{r} - \boldsymbol{r}'|} + \boldsymbol{F}(\boldsymbol{r}') \cdot \nabla' \left(\frac{1}{|\boldsymbol{r} - \boldsymbol{r}'|}\right)$ and applying the divergence theorem to eliminate surface terms (which vanish due to rapid decay at infinity), we obtain:
	\begin{align}
		\boldsymbol{F}(\boldsymbol{r}) = &\frac{1}{4\pi} \int_{\mathbb{R}^3} \frac{(\nabla' \cdot \boldsymbol{F}(\boldsymbol{r}'))(\boldsymbol{r} - \boldsymbol{r}')}{|\boldsymbol{r} - \boldsymbol{r}'|^3} \, \mathrm{d}^3\boldsymbol{r}' \\
		&+ \frac{1}{4\pi} \int_{\mathbb{R}^3} \frac{(\nabla' \times \boldsymbol{F}(\boldsymbol{r}')) \times (\boldsymbol{r} - \boldsymbol{r}')}{|\boldsymbol{r} - \boldsymbol{r}'|^3} \, \mathrm{d}^3\boldsymbol{r}'
	\end{align}
	
	To establish uniqueness, suppose there exist two decompositions:
	\begin{equation}
		\boldsymbol{F} = \boldsymbol{F}_c^{(1)} + \boldsymbol{F}_s^{(1)} = \boldsymbol{F}_c^{(2)} + \boldsymbol{F}_s^{(2)}
	\end{equation}
	
	This implies:
	\begin{equation}
		\boldsymbol{F}_c^{(1)} - \boldsymbol{F}_c^{(2)} = \boldsymbol{F}_s^{(2)} - \boldsymbol{F}_s^{(1)}
	\end{equation}
	
	Let $\boldsymbol{G} = \boldsymbol{F}_c^{(1)} - \boldsymbol{F}_c^{(2)} = \boldsymbol{F}_s^{(2)} - \boldsymbol{F}_s^{(1)}$. Since both $\boldsymbol{F}_c^{(1)}$ and $\boldsymbol{F}_c^{(2)}$ are conservative:
	\begin{equation}
		\nabla \times \boldsymbol{G} = \nabla \times (\boldsymbol{F}_c^{(1)} - \boldsymbol{F}_c^{(2)}) = \boldsymbol{0}
	\end{equation}
	
	Since both $\boldsymbol{F}_s^{(1)}$ and $\boldsymbol{F}_s^{(2)}$ are solenoidal:
	\begin{equation}
		\nabla \cdot \boldsymbol{G} = \nabla \cdot (\boldsymbol{F}_s^{(2)} - \boldsymbol{F}_s^{(1)}) = 0
	\end{equation}
	
	Therefore, $\boldsymbol{G}$ satisfies both $\nabla \times \boldsymbol{G} = \boldsymbol{0}$ and $\nabla \cdot \boldsymbol{G} = 0$. By the vector Laplacian identity:
	\begin{equation}
		\nabla^2 \boldsymbol{G} = \nabla(\nabla \cdot \boldsymbol{G}) - \nabla \times (\nabla \times \boldsymbol{G}) = \boldsymbol{0}
	\end{equation}
	
	With the boundary condition that $\boldsymbol{G}$ vanishes at infinity (due to rapid decay), the unique solution to $\nabla^2 \boldsymbol{G} = \boldsymbol{0}$ is $\boldsymbol{G} = \boldsymbol{0}$. Hence, the decomposition is unique.
\end{proof}

The Helmholtz decomposition theorem~\cite{griffiths2023introduction} establishes that any vector field can be uniquely separated into its fundamental physical components. In Hamiltonian Neural Networks, this provides the theoretical foundation for learning physically meaningful representations where the conservative component $\boldsymbol{v}_c$ preserves energy and the solenoidal component $\boldsymbol{v}_s$ preserves volume, naturally capturing the physics of deformable systems.

\subsection{Algorithm Pipeline} \label{sec:alg}
For clarity and ease of understanding, we present the overall algorithm pipeline of NeHaD in Alg. \ref{alg:nehad}.

\begin{algorithm}[t]
	\caption{NeHaD: Neural Hamiltonian Deformation Fields}
	\label{alg:nehad}
	\SetAlgoLined
	\KwIn{input images $\{\mathcal{I}_t\}_{t=1}^T$, camera parameters $\{\boldsymbol{V}_t\}_{t=1}^T$}
	\KwOut{deformed Gaussians $\mathcal{G}'$}
	
	% ===== Initialization =====
	\textbf{Initialization:}\\
	canonical Gaussians $\mathcal{G}_0 \gets \{\boldsymbol{\mu},\boldsymbol{s},\boldsymbol{r},\alpha,\boldsymbol{c}\}$;\,
	hex-plane encoder $\mathcal{E}$\;
	HNN decoder $\mathcal{D} \gets (\mathcal{M},F_1,F_2)$;\,
	adapters $\mathcal{A} \gets \{\mathcal{A}_{\mu},\mathcal{A}_{s},\mathcal{A}_{r}\}$\;
	equilibrium states $\{\boldsymbol{\mu}_{eq}^{(i)},t_{eq}^{(i)}\}$\\
	
	% ===== Training loop =====
	\For{$iter=1$ \KwTo $N_{iter}$}{
		\For{$t=1$ \KwTo $T$}{
			Sample $\boldsymbol{V}_t$\tcp*{current viewpoint}
			
			\For{each Gaussian $i$}{
				% --- Feature extraction ---
				\textbf{Feature extraction}\;
				$\boldsymbol{u}_i \gets (\boldsymbol{\mu}_i,t)$\tcp*{4D coordinate}
				$\boldsymbol{f}_i \gets \prod_{k}\psi(\boldsymbol{P}_k,\pi_k(\boldsymbol{u}_i))$\tcp*{hex-plane feature}
				$\boldsymbol{h}_i \gets \mathcal{M}(\boldsymbol{f}_i)$\tcp*{latent representation}
				
				% --- Hamiltonian offsets ---
				\textbf{Hamiltonian prediction}\;
				$\boldsymbol{v}_c \gets \nabla_{\boldsymbol{h}_i}F_1(\boldsymbol{h}_i)\boldsymbol{I}$\tcp*{conservative field}
				$\boldsymbol{v}_s \gets \nabla_{\boldsymbol{h}_i}F_2(\boldsymbol{h}_i)\boldsymbol{M}^{\top}$\tcp*{solenoidal field}
				$\boldsymbol{v} \gets \boldsymbol{v}_c+\boldsymbol{v}_s$\tcp*{combined field}
				$\Delta\boldsymbol{\mu}_i \gets \mathcal{A}_{\mu}(\boldsymbol{v})$;\,
				$\Delta\boldsymbol{s}_i \gets \mathcal{A}_{s}(\boldsymbol{v})$;\,
				$\Delta\boldsymbol{r}_i \gets \mathcal{A}_{r}(\boldsymbol{v})$\tcp*{offsets}
				
				% --- Physics constraints ---
				\textbf{Physics-informed constraints}\;
				$\tilde{\boldsymbol{\mu}}_i \gets \boldsymbol{\mu}_i+\Delta t\cdot\Delta\boldsymbol{\mu}_i+\tfrac{(\Delta t)^2}{2}\boldsymbol{v}_c$\tcp*{symplectic position update}
				$\phi_i \gets 2\,\mathrm{atan2}(\|\Delta \boldsymbol{g}_i\|,\Delta w_i)$\;
				$\phi_i' \gets \phi_{max}\tanh(\phi_i/\phi_{max})$\tcp*{rigidity clamp}
				$\Delta \boldsymbol{r}_i' \gets [\cos(\phi_i'/2),\,\sin(\phi_i'/2)\tfrac{\Delta \boldsymbol{g}_i}{\|\Delta \boldsymbol{g}_i\|}]^{\top}$\;
				$\boldsymbol{r}_i' \gets \mathcal{N}(\boldsymbol{r}_i \otimes \Delta\boldsymbol{r}_i')$\tcp*{unit quaternion update}
				
				% --- Boltzmann masks ---
				\textbf{Boltzmann equilibrium masks}\;
				$\Delta d_i \gets \|\boldsymbol{\mu}_i-\boldsymbol{\mu}_{eq}^{(i)}\|_2/\sigma_s$;\,
				$\Delta\tau_i \gets (t-t_{eq}^{(i)})/\sigma_t$\tcp*{spatial and temporal deviations}
				$E_{st}^{(i)} \gets \tfrac{1}{2}(\Delta d_i^2+\Delta\tau_i^2)+\lambda\Delta d_i\Delta\tau_i$\;
				$M_{pos}^{(i)} \gets (1-\gamma)\exp(-\beta E_{st}^{(i)})+\gamma$\tcp*{position mask}
				$E_t^{(i)} \gets \tfrac{1}{2}((t-t_{eq}^{(i)})/\sigma_t)^2$\;
				$M_{scale}^{(i)} \gets (1-\gamma)\exp(-\beta E_t^{(i)})+\gamma$\tcp*{scaling mask}
				
				% --- Apply masks ---
				\textbf{Masked updates}\;
				$\boldsymbol{\mu}_i' \gets \tilde{\boldsymbol{\mu}}_i\odot(1-M_{pos}^{(i)})+\boldsymbol{\mu}_i\odot M_{pos}^{(i)}$\tcp*{final position}
				$\boldsymbol{s}_i' \gets \boldsymbol{s}_i+\Delta\boldsymbol{s}_i\odot(1-M_{scale}^{(i)})$\tcp*{final scaling}
			}
			
			% --- Rendering ---
			\textbf{Rendering and optimization}\;
			$\mathcal{G}' \gets \{\boldsymbol{\mu}',\boldsymbol{s}',\boldsymbol{r}',\alpha,\boldsymbol{c}\}$\;
			$\boldsymbol{X} \gets \mathcal{R}(\boldsymbol{V}_t,\mathcal{G}')$\tcp*{differentiable rasterization}
			$\mathcal{L} \gets (1-\lambda)\mathcal{L}_1+\lambda\mathcal{L}_{DSSIM}+\mathcal{L}_{TV}$\tcp*{loss}
			Update all parameters by backpropagation\;
		}
	}
	
	\Return{Optimized deformation model}
\end{algorithm}

\subsection{Implementation Details} \label{sec:detail}
\noindent \textbf{Datasets.} We evaluate on both monocular and multi-view dynamic scene datasets, including synthetic and real-world scenes:

\emph{Synthetic Dataset.} We use D-NeRF~\cite{dnerf} for synthetic scene evaluation. D-NeRF contains 8 scenes with large-scale movements and non-Lambertian materials, posing challenges for dynamic modeling. We render at 800$\times$800 resolution.

\emph{Real-World Datasets.} We evaluate on HyperNeRF~\cite{hypernerf} and DyNeRF~\cite{dynerf} for real-world scenes. HyperNeRF captures scenes using 1 or 2 cameras with straightforward motions, while DyNeRF contains 6 ten-second videos recorded at 30 FPS using 15 to 20 static cameras. Unlike synthetic data, camera poses are estimated using COLMAP~\cite{colmap}. We report quantitative results on HyperNeRF's ``vrig'' (validation rig) scenes at 536$\times$960 resolution and DyNeRF at 1352$\times$1014 resolution.

\noindent \textbf{Baselines.} We compare NeHaD with several state-of-the-art methods~\cite{4dgs, sc-gs, k-planes, hexplane, grid4d, swift4d, saro-gs, spacetime-gaussians, tineuvox, deformable3d}. TiNeuVox, K-Planes, and HexPlane are NeRF-based approaches: TiNeuVox uses 3D grids while K-Planes and HexPlane employ plane-based explicit representations. 4DGS, DeformGS, SC-GS, STG, Grid4D, SaRO-GS, and Swift4D are Gaussian-based methods. 4DGS, DeformGS and SaRO-GS combine plane-based representations with MLP-based deformation fields. SC-GS builds on DeformGS using sparse control points for dynamic rendering and editing. STG extends 3DGS to 4D spacetime with temporal opacity and polynomial motion for real-time synthesis. Grid4D and Swift4D both use hash encoding as explicit representations to decompose spatial-temporal Gaussian deformations.

\noindent \textbf{Evaluation Metrics.} We employ multiple metrics to evaluate our method. For rendering quality, we use PSNR, SSIM~\cite{ssim}, DSSIM, MS-SSIM, and LPIPS~\cite{lpips}. For rendering efficiency, we measure FPS. All results are averaged across all scenes unless specified otherwise. Due to the lack of standard metrics for physical plausibility, we evaluate this aspect through qualitative comparisons and dynamics visualization.

\noindent \textbf{Hyperparameters.} Our hyperparameter settings largely follow 4DGS~\cite{4dgs}. For all datasets, a basic resolution of 64 is used for the hex-plane encoder with upsampling factors of 2 and 4. The learning rate starts at 0.0016 and is decayed to 0.00016 by the end of training. The pruning interval is changed to 8000 and only a single upsampling rate of the hex-plane encoder as 2 because the relatively simple structures in D-NeRF~\cite{dnerf} dataset. The Gaussian deformation decoder consists of a single HNN trained with Adam optimizer~\cite{adam} at a learning rate of 0.001. Most scenes are trained for 20,000 iterations with a batch size of 1, while complex real-world scenes such as \emph{flame\_salmon} require additional training, and we empirically set to 30,000 iterations. All experiments are conducted on a single NVIDIA RTX 3090 GPU.

\subsection{Detailed Methodology of Streaming} \label{sec:stream}
We adapt NeHaD to streaming through two enhancements: (a) scale-aware anisotropic MipMapping for efficient multi-level texture sampling, and (b) layered progressive optimization for global level-of-detail (LOD) rendering, as illustrated in Fig. \ref{fig:streaming}.
\begin{figure}[t]
	\centering
	\includegraphics[width=1\columnwidth]{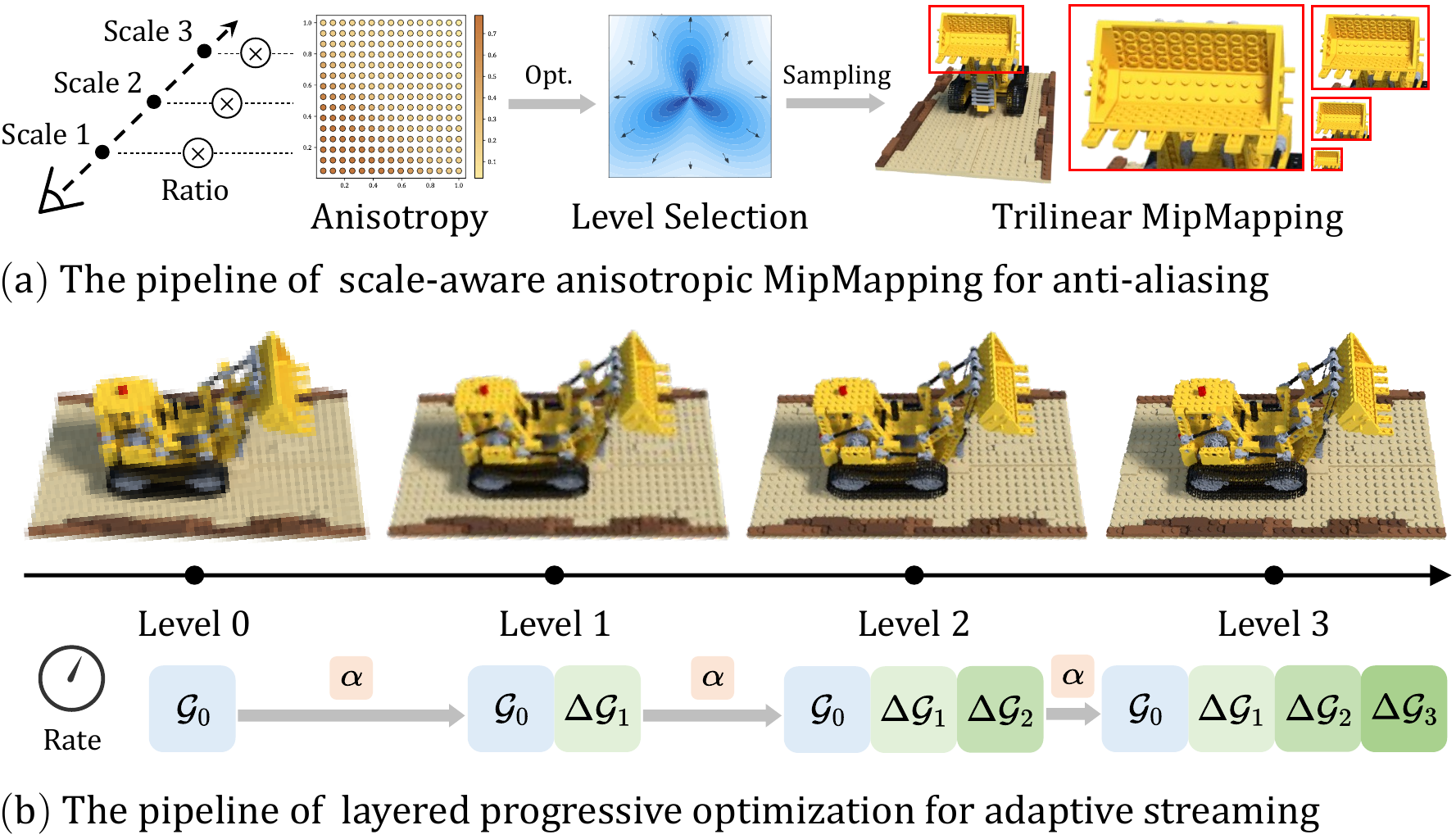}
	\caption{\textbf{Methodology overview of adapting NeHaD to streaming.} (a) Scale-aware anisotropic mipmapping pipeline for anti-aliasing: different scales from varying viewing distances are analyzed using anisotropic weights, followed by mipmap level selection and trilinear MipMapping for texture sampling. (b) Layered progressive optimization pipeline for adaptive streaming: training proceeds progressively across different LOD levels (from lower to higher resolution), with Gaussian residuals obtained adaptively based on opacity to enhance the ground Gaussian (level 0) in a layered manner for increased detail at higher rendering rates.}
	\label{fig:streaming}
\end{figure}

To formulate the streaming strategy mathematically, given a scale factor $\boldsymbol{s} \in \mathbb{R}^3$ representing sampling scales in spatial dimensions, we aim to compute an adaptive mipmap level vector $\boldsymbol{l} \in \mathbb{R}^4$ for anisotropic filtering requirements. The input scale is normalized by $\text{clamp}(\cdot, \boldsymbol{a}, \boldsymbol{b})$, constraining $\boldsymbol{s}$ to $[\frac{\boldsymbol{s}}{2}, \frac{\boldsymbol{s}}{2} \odot \boldsymbol{r}]$, where $\boldsymbol{r} \in \mathbb{N}^3$ is the resolution ratio vector and $\odot$ represents element-wise multiplication. The level $\boldsymbol{l}$ is computed using normalized $\tilde{\boldsymbol{s}}$ and original scale $\boldsymbol{s}$ through binary logarithmic operations, \ie $\boldsymbol{l} = \log_2(\frac{2\tilde{\boldsymbol{s}}}{\boldsymbol{s}})$.

The dominant axis is determined by the ratio of current level to maximum level. Given the principle level $\boldsymbol{L}$ from this dominant axis, the final level $\hat{\boldsymbol{l}}$ is computed as:
\begin{equation}
	\hat{\boldsymbol{l}}=\boldsymbol{L}-\beta(\rho)\cdot\left(\boldsymbol{L}-\bar{L}\boldsymbol{1}\right), \quad \beta(\rho)=\frac{\mathrm{tanh}(\rho/3-1)}{1+\mathrm{tanh}(\rho/3-1)},
\end{equation}
where $\bar{L}$ is the mean principle level across spatial dimensions, $\beta(\rho)$ is the anisotropy regularization factor smoothly interpolating between isotropic ($\beta\to0$) and anisotropic ($\beta\to1$) regimes, and $\rho$ is computed by dividing the maximum and minimum $\boldsymbol{l}$. Using the selected final level $\hat{\boldsymbol{l}}$, we perform trilinear MipMapping through the nvdiffrast~\cite{laine2020modular} library.

For layered progressive optimization, following \cite{lapisgs}, we begin with the lowest quality level (lowest resolution images) and initially optimize ground Gaussian splats $\mathcal{G}_0$. As training progresses, higher resolution views are considered at each layer. Gaussian splats at various quality levels are represented as $\{\mathcal{G}_i\}_{i=0}^N$ (with total $N$ levels). The layered progressive optimization process is represented as $\{\mathcal{G}_0, \{\Delta\mathcal{G}_i\}_{i=1}^N\}$:
\begin{equation}
	\mathcal{G}_i = \mathcal{G}_0 + \sum_{j=1}^{i} \Delta \mathcal{G}_j, \quad j\in\{1,2,\dots,N\},
\end{equation}
where $\Delta \mathcal{G}_j$ is the $j$-th enhancement layer. To achieve smooth transitions between low and high quality layers, we follow \cite{lapisgs, lightgaussian, sun2024multi} by adjusting the opacity threshold, allowing the system to fine-tune transmitted rates and adapt to varying network bandwidths or device capabilities.

\subsection{Notation Table}
The notations used throughout the paper can be found in Tab. \ref{tbl:notation}.

\renewcommand{\arraystretch}{1.05}
\begin{table}[t]
	\caption{Summary of notations used throughout the paper}
	\label{tbl:notation}
	\centering
	\small
	\rowcolors{3}{gray!7}{white}
	\begin{tabular}{p{0.22\columnwidth} p{0.7\columnwidth}}
		\toprule
		\textbf{Symbol} & \textbf{Description} \\
		\midrule
		
		%			\multicolumn{2}{l}{\textit{Gaussian Primitives}} \\ 
		$\mathcal{G}$ & \desc{Gaussian primitive} \\
		$\boldsymbol{\mu} \in \mathbb{R}^3$ & \desc{Position} \\
		$\boldsymbol{s} \in \mathbb{R}^3$ & \desc{Scaling} \\
		$\boldsymbol{r} \in \mathbb{R}^4$ & \desc{Rotation quaternion} \\
		$\alpha \in \mathbb{R}$ & \desc{Opacity (scalar)} \\
		$\boldsymbol{c} \in \mathbb{R}^n$ & \desc{Color (spherical harmonics, SH)} \\
		$\boldsymbol{\Sigma} \in \mathbb{R}^{3\times3}$ & \desc{Covariance matrix} \\
		%			\addlinespace[0.4em]
		
		%			\multicolumn{2}{l}{\textit{Deformation Fields}} \\ 
		$\Delta \mathcal{G}$ & \desc{Gaussian deformation} \\
		$\Delta \boldsymbol{\mu}, \Delta \boldsymbol{s}, \Delta \boldsymbol{r}$ & \desc{Position, scaling, and rotation offsets} \\
		$\mathcal{E}$ & \desc{Hex-plane encoder} \\
		$\mathcal{D}$ & \desc{Deformation decoder} \\
		$\boldsymbol{P}_k$ & \desc{Plane $k$ in hex-plane factorization} \\
		$\boldsymbol{f}$ & \desc{Hex-plane spatial–temporal features} \\
		$\boldsymbol{u}=(x,y,z,t)$ & \desc{4D coordinate} \\
		%			\addlinespace[0.4em]
		
		%			\multicolumn{2}{l}{\textit{Hamiltonian Mechanics}} \\ 
		$\boldsymbol{q}, \boldsymbol{p} \in \mathbb{R}^d$ & \desc{Coordinates (position, momentum)} \\
		$\mathcal{H}$ & \desc{Hamiltonian (total energy)} \\
		$\mathcal{U}$ & \desc{Potential energy} \\
		$\mathcal{K}$ & \desc{Kinetic energy} \\
		$\boldsymbol{S}_{\mathcal{H}}$ & \desc{Symplectic gradient} \\
		$\boldsymbol{M}$ & \desc{Permutation tensor} \\
		%			\addlinespace[0.4em]
		
		%			\multicolumn{2}{l}{\textit{Neural Networks}} \\ 
		$\mathcal{M}$ & \desc{MLP baseline in HNN} \\
		$\boldsymbol{h} \in \mathbb{R}^W$ & \desc{Latent representation} \\
		$F_1, F_2$ & \desc{Scalar functions for vector fields} \\
		$\boldsymbol{v}_c, \boldsymbol{v}_s$ & \desc{Conservative and solenoidal fields} \\
		$\mathcal{A}_{\mu}, \mathcal{A}_{s}, \mathcal{A}_{r}$ & \desc{Gaussian attribute adapters} \\
		$\boldsymbol{\theta}$ & \desc{Network parameters} \\
		%			\addlinespace[0.4em]
		
		%			\multicolumn{2}{l}{\textit{Boltzmann Equilibrium Decomposition}} \\ 
		$\boldsymbol{\mu}_{eq}^{(i)}, t_{eq}^{(i)}$ & \desc{Spatial/temporal equilibrium states} \\
		$E_{st}^{(i)}, E_{t}^{(i)}$ & \desc{Spatial-temporal/temporal energy deviation} \\
		$M_{pos}^{(i)}, M_{scale}^{(i)}$ & \desc{Equilibrium masks for position/scaling} \\
		$\beta=1/T$ & \desc{Inverse temperature} \\
		$\gamma$ & \desc{Minimum dynamic responsiveness} \\
		$\sigma_s, \sigma_t$ & \desc{Spatial and temporal sensitivity scales} \\
		$\lambda$ & \desc{Coupling coefficient} \\
		%			\addlinespace[0.4em]
		
		%			\multicolumn{2}{l}{\textit{Physics-Informed Constraints}} \\ 
		$\boldsymbol{F}_i$ & \desc{Force field} \\
		$\tilde{\boldsymbol{\mu}}_i$ & \desc{Symplectically integrated position} \\
		$\phi_i$ & \desc{Rotation angle} \\
		$\phi_{max}$ & \desc{Maximum allowable rotation} \\
		$\Delta \boldsymbol{g}_i, \Delta w_i$ & \desc{Vector/scalar parts of quaternion increment} \\
		%			\addlinespace[0.4em]
		
		%			\multicolumn{2}{l}{\textit{Rendering and Optimization}} \\ 
		$\mathcal{R}$ & \desc{Differentiable rasterization operator} \\
		$\boldsymbol{X}$ & \desc{Rendered image (novel view)} \\
		$\mathcal{I}_t$ & \desc{Input image at timestamp $t$} \\
		$\boldsymbol{V}_t$ & \desc{Camera view matrix at timestamp $t$} \\
		$\mathcal{L}, \mathcal{L}_1, \dots, \mathcal{L}_{TV}$ & \desc{Loss terms} \\
		$\Delta t,\; T,\; N_{iter}$ & \desc{Time step, total timestamps, and training iterations} \\
		\bottomrule
	\end{tabular}
\end{table}

\end{document}